\let\save@mathaccent\mathaccent
\newcommand*\if@single[3]{%
  \setbox0\hbox{${\mathaccent"0362{#1}}^H$}%
  \setbox2\hbox{${\mathaccent"0362{\kern0pt#1}}^H$}%
  \ifdim\ht0=\ht2 #3\else #2\fi
  }
\newcommand*\rel@kern[1]{\kern#1\dimexpr\macc@kerna}
\newcommand*\widebar[1]{\@ifnextchar^{{\wide@bar{#1}{0}}}{\wide@bar{#1}{1}}}
\newcommand*\wide@bar[2]{\if@single{#1}{\wide@bar@{#1}{#2}{1}}{\wide@bar@{#1}{#2}{2}}}
\newcommand*\wide@bar@[3]{%
  \begingroup
  \def\mathaccent##1##2{%
    \let\mathaccent\save@mathaccent
    \if#32 \let\macc@nucleus\first@char \fi
    \setbox\z@\hbox{$\macc@style{\macc@nucleus}_{}$}%
    \setbox\tw@\hbox{$\macc@style{\macc@nucleus}{}_{}$}%
    \dimen@\wd\tw@
    \advance\dimen@-\wd\z@
    \divide\dimen@ 3
    \@tempdima\wd\tw@
    \advance\@tempdima-\scriptspace
    \divide\@tempdima 10
    \advance\dimen@-\@tempdima
    \ifdim\dimen@>\z@ \dimen@0pt\fi
    \rel@kern{0.6}\kern-\dimen@
    \if#31
      \overline{\rel@kern{-0.6}\kern\dimen@\macc@nucleus\rel@kern{0.4}\kern\dimen@}%
      \advance\dimen@0.4\dimexpr\macc@kerna
      \let\final@kern#2%
      \ifdim\dimen@<\z@ \let\final@kern1\fi
      \if\final@kern1 \kern-\dimen@\fi
    \else
      \overline{\rel@kern{-0.6}\kern\dimen@#1}%
    \fi
  }%
  \macc@depth\@ne
  \let\math@bgroup\@empty \let\math@egroup\macc@set@skewchar
  \mathsurround\z@ \frozen@everymath{\mathgroup\macc@group\relax}%
  \macc@set@skewchar\relax
  \let\mathaccentV\macc@nested@a
  \if#31
    \macc@nested@a\relax111{#1}%
  \else
    \def\gobble@till@marker##1\endmarker{}%
    \futurelet\first@char\gobble@till@marker#1\endmarker
    \ifcat\noexpand\first@char A\else
      \def\first@char{}%
    \fi
    \macc@nested@a\relax111{\first@char}%
  \fi
  \endgroup
}
\newcommand{\R}{\mathbb{R}}                         	
\renewcommand{\P}{\mathbb{P}}                       	
\newcommand{\ind}[1]{\mathbb{I}_{\left\{#1\right\}}}	
\DeclarePairedDelimiter\norm{\lvert}{\rvert}
\DeclarePairedDelimiter\Ebraces{[}{]}
\newcommand{\E}{\mathbb{E}\Ebraces}
\newcommand{\abs}{\norm}
\newcommand{\dGOP}{\bar{S}^*}
\newcommand{\GOP}{{S}^*}
\newcommand{\Ainfo}{\mathcal{A}}
\newcommand{\Afiltration}{\underline{\mathcal{A}}}
\newcommand{\vecW}{\mathbf{w}}
\theoremstyle{plain}
\newtheorem{defn}{Definition}[section]
\newtheorem{thm}{Theorem}[section]
\newtheorem{prop}[thm]{Proposition}
\title{Quantization Under the Real-world Measure: \\Fast and Accurate Valuation of Long-dated Contracts}
\author[1]{Ralph Rudd}
\author[1,2]{Thomas A. McWalter\thanks{Correspondence: tom@analytical.co.za}}
\author[1,3]{J\"{o}rg Kienitz}
\author[1,4]{Eckhard Platen}
\affil[1]{The African Institute for Financial Markets and Risk Management (AIFMRM), University of Cape Town}
\affil[2]{Department of Finance and Investment Management, University of Johannesburg}
\affil[3]{Fachbereich Mathematik und Naturwissenschaften,
Bergische Universit\"{a}t Wuppertal}
\affil[4]{Finance Discipline Group and School of Mathematical and Physical Sciences, University of Technology Sydney}
\begin{document}

\maketitle

\begin{abstract}
This paper provides a methodology for fast and accurate pricing of the long-dated contracts that arise as the building blocks of insurance and pension fund agreements.
It applies the recursive marginal quantization (RMQ) and joint recursive marginal quantization (JRMQ) algorithms outside the framework of traditional risk-neutral methods by pricing options under the real-world probability measure, using the benchmark approach. The benchmark approach is reviewed, and the real-world pricing theorem is presented and applied to various long-dated claims to obtain less expensive prices than suggested by traditional risk-neutral valuation. The growth-optimal portfolio (GOP), the central object of the benchmark approach, is modelled using the time-dependent constant elasticity of variance model (TCEV). Analytic European option prices are derived and the RMQ algorithm is used to efficiently and accurately price Bermudan options on the GOP. The TCEV model is then combined with a $3/2$ stochastic short-rate model and RMQ is used to price zero-coupon bonds and zero-coupon bond options, highlighting the departure from risk-neutral pricing.
\end{abstract}

\section{Introduction}
    
It has been argued that requiring the existence of a risk-neutral measure is an unrealistic modelling constraint, severe enough to limit the efficacy of long-term pricing and hedging strategies \citep{hulley2012hedging}.

It is, however, possible to derive a unified framework for portfolio optimization, derivative pricing and risk management without the need for an equivalent martingale measure, as described by \citet{platen2004benchmark}, the seminal work on the \emph{benchmark approach}. Under this framework, the Law of One Price no longer holds \citep{platen2008law}, and the Fundamental Theorems of Asset Pricing \citep{delbaen1994general,delbaen1998fundamental} do not apply,  as the traditional no-arbitrage concept is relaxed. As a consequence, certain long-dated contracts can be less expensively replicated than suggested by the classical theory.

Central to the benchmark approach is the concept of the growth-optimal portfolio (GOP), first explored by \citet{kelly1956new}. This portfolio strategy is such that, when denoted in units of the GOP, every asset becomes a supermartingale. This allows the derivation of the real-world pricing theorem; yielding the least expensive prices under the real-world probability measure.

The goal of this work is to provide a simple numerical toolbox for the pricing of long-dated contracts under the benchmark approach by using recursive marginal quantization (RMQ). RMQ was introduced by \cite{pages2015recursive} and is a numerical technique for the optimal approximation of functionals of solutions to stochastic differential equations (SDEs).

The main contribution of this work is two-fold. First, analytic European option prices are derived for the time-dependent constant elasticity of variance (TCEV) model for the GOP. This is an extension of formulae presented by \cite{baldeaux2014tractable}, which generalize the previous option pricing formulae from \cite{miller2008analytic, miller2010real}. Secondly, RMQ is shown to be a highly effective tool for pricing both European options on the GOP, under the assumption of constant interest rates, and zero-coupon bond options, under a stochastic short-rate model, namely the $3/2$-model from \cite{ahn1999parametric}. This efficient pricing mechanism allows for the wider application of the benchmark approach.

The paper proceeds as follows.
In Section \ref{Sec: Benchmark Approach} the benchmark approach is briefly reviewed in the context of a two-asset scalar diffusion market. The form of the growth-optimal portfolio strategy is specified, and the real-world pricing theorem is derived. The TCEV model is introduced and its probabilistic features are detailed.

In the first part of Section \ref{Sec: Option Pricing}, analytic European option pricing formulae are derived for the TCEV model under the assumption of a constant interest rate. The pricing efficiency of these formulae is compared with the approximate prices obtained using RMQ. The latter part of Section \ref{Sec: Option Pricing} deals with stochastic interest rates, specifically the hybrid model introduced by \cite{baldeaux2015hybrid}. Analytic zero-coupon bond prices are presented under the assumption of independence between the stochastic short rate and the GOP. The influence of correlation is briefly explored. Finally, RMQ is used to efficiently price European options on zero-coupon bonds in this framework. Section \ref{Sec: Conclusion} concludes.

\section{The Benchmark Approach}
\label{Sec: Benchmark Approach}
This section presents a one-dimensional version of the continuous financial market presented by \citet{platen2004benchmark}, similar to the introduction given by \citet[Chap.~9]{platen2006benchmark}. For full mathematical rigor see the original derivation of real-world pricing by \cite{platen2002arbitrage}, and the extension to the jump-diffusion framework \citep[Chap.~14]{platen2006benchmark}. The most general derivation of the ``Law of Minimal Price'', of which real-world pricing is a result, is presented by \citet{platen2008law}.

Assume the existence of a filtered probability space, $(\Omega, \Ainfo, \Afiltration, \P )$. The filtration $\Afiltration = (\Ainfo_t)_{t\in[0,\infty)}$ is assumed to satisfy the usual conditions.

Consider a continuous financial market consisting of two assets: a risk-free savings account, $S^0 = \{S^0_t,\ t\in[0,\infty)\}$, and a risky primary security, $S^1 = \{S^1_t,\ t\in[0,\infty)\}$. They respectively obey the SDEs
\begin{align}
dS^0_t &= S^0_tr_t\,dt \label{Eq: Savings Account}\\
\intertext{and}
dS^1_t &= S^1_t(a_t\,dt+b_t\,dW_t), \label{Eq: Diffusion}
\end{align}
with $r = \{r_t,\ t\in[0,\infty)\}$, the adapted short rate process. 
Here $b = \{b_t,\ t\in[0,\infty)\}$ is a predictable and strictly positive process known as the diffusion coefficient of $S^1$ with respect to the standard Brownian motion, $W$, and is assumed to satisfy
\begin{equation*}
    \int_0^T b_s^2\,ds < \infty
\end{equation*}
almost surely for $T\in[0,\infty)$, a finite time-horizon. It is also assumed that $a = \{a_t,\ t\in[0,\infty)\}$, known as the drift, is a predictable process satisfying
\begin{equation*}
    \int_0^T |a_s|\,ds < \infty,
\end{equation*}
almost surely. It is assumed that the SDE \eqref{Eq: Diffusion} has a unique strong solution, see, for example, \citet{platen2006benchmark}. 

In the market considered, the number of risky securities is the same as the number of Wiener processes. Refer to \cite{platen2004pricing} for the case where there are more sources of uncertainty than traded securities.

Defining the market price of risk (MPOR) process as
\begin{equation*}
    \theta_t \vcentcolon= b_t^{-1}(a_t-r_t),
\end{equation*}
allows the SDE \eqref{Eq: Diffusion} to be re-written as
\begin{equation}
    dS^1_t = S^1_t\left( (r_t + b_t\theta_t)\,dt + b_t\,dW_t) \right). \label{Eq: Diffusion 2}
\end{equation}
It is assumed that the absolute value of the MPOR process is always finite. 

\subsection{The Growth Optimal Portfolio}
The GOP has often been attributed to \citet{kelly1956new}. \citet{hakansson1995capital} state in their review of the growth optimal investment strategy that the GOP was already implied by Bernoulli's solution to the St.~Petersburg Paradox as early as 1738 (see \cite{samuelson1977st} for this interesting digression). 
An important application of the GOP to claim valuation and long-run portfolio growth is \cite{long1990numeraire}, where it is shown that, under certain constraints, risk-neutral pricing is equivalent to pricing under the real-world probability measure using the GOP as the numeraire.

The GOP is the central object of the benchmark approach and hence real-world pricing. 
In a general semimartingale framework, \cite{karatzas2007numeraire} show that the ``no unbounded profit with bounded risk'' no-arbitrage condition is necessary and sufficient for the existence of the GOP. This condition is placed into its proper context alongside the range of applicable arbitrage statements by \cite{fontana2015weak}.

In the market defined above, a predictable stochastic vector process $\delta = \{\delta_t = (\delta_t^0, \delta_t^1) ,\ t\in[0,T]\}$ is called a \emph{strategy} if, for all $t\in[0,T]$, the stochastic It\^{o} integrals
\begin{equation*}
\int_0^t\delta^0_s\, dS^0_s \qquad\mathrm{and}\qquad \int_0^t\delta^1_s\,dS^1_s
\end{equation*}
exist.
Denote by $S^\delta = \{S^\delta_t, t\in[0,T]\}$ the time-$t$ value of the portfolio process associated with the strategy $\delta$, defined as 
\begin{equation*}
    S^\delta_t = \delta^0_t S^0_t + \delta^1_t S^1_t.
\end{equation*}
A strategy and its corresponding portfolio process are said to be self-financing if
\begin{equation*}
    dS^\delta_t = \delta^0_t\,dS^0_t + \delta^1_t\,dS^1_t,
\end{equation*}
for all $t\in[0,T]$.
The self-financing condition ensures that instantaneous changes in the value of the portfolio are due to changes in the prices of the constituent securities and not to external deposits or withdrawals. Only self-financing portfolios are considered in the following.

At this point it is necessary to introduce the concept of \emph{admissable} portfolios, to avoid the arbitrage opportunities generated by traditional doubling strategies. Admissable strategies are usually either constrained via an absolute lower bound or an integrability condition (see \citet[Chap.~7]{hunt2004financial} for a discussion).
Only the set, $\mathcal{V}^+$, of strictly positive portfolios will be considered, thus providing an absolute lower bound at zero.
For $S^\delta\in\mathcal{V}^+$, define the portfolio fraction
\begin{equation*}
    \pi^j_{\delta, t} = \frac{\delta^j_tS^j_t}{S^\delta_t},
\end{equation*}
as the fraction of the total portfolio value invested in each asset, $S^j_t$, for $j\in\{0,1\}$. Portfolio fractions can be negative but must always sum to one. Using \eqref{Eq: Diffusion 2}, the SDE for a self-financing portfolio in $\mathcal{V}^+$ can now be written as
\begin{equation*}
    dS^\delta_t = S^\delta_t\left((r_t + \pi_{\delta,t}^1b_t\theta_t)\,dt + \pi_{\delta,t}^1b_t\,dW_t\right).
\end{equation*}
A simple application of It\^{o}'s formula provides the SDE for the logarithm of the portfolio 
\begin{equation*}
    d\log S^\delta_t = g_{\delta,t}\,dt + \pi_{\delta,t}^1b_t\,dW_t,
\end{equation*}
with portfolio growth rate
\begin{equation}
    g_{\delta,t} = r_t + \pi_{\delta,t}^1b_t\theta_t - \frac{1}{2}(\pi^1_{\delta, t}b_t)^2. \label{Eq: Growth Rate}
\end{equation}

The GOP is the portfolio that maximises this growth rate, that is, the drift of the log-portfolio.
Mathematically, a strictly positive portfolio value process, $S^{\delta_*} = \{S^{\delta_*}_t, t\in[0,T]\}$, is a growth optimal portfolio if, for all $t\in[0,T]$ and all $S^\delta\in\mathcal{V}^+$, the inequality
\begin{equation*}
    g_{\delta_*,t} \geq g_{\delta,t}
\end{equation*}
holds almost surely.
From the first-order condition for the maximum of the growth rate \eqref{Eq: Growth Rate}, the optimal fraction to be invested in $S^1$ can be found as
\begin{equation*}
    \pi_{\delta_*,t}^1 = b_t^{-1}\theta_t.
\end{equation*}
Consequently the SDE for the GOP is given by
\begin{equation}
    \label{Eq: The GOP SDE}
    dS^{\delta_*}_t = S_t^{\delta_*}((r_t+\theta^2_t)\,dt + \theta_t\,dW_t),
\end{equation}
with $t\in[0,T]$ and $S_0^{\delta_*} > 0.$ Contingent claims can now be priced under the real-world probability measure using the GOP as the numeraire or \emph{benchmark}, as will be detailed below.

\subsection{Real-world Pricing}
\label{Sec: Real-world Pricing}
Using the GOP as benchmark and numeraire, consider the evolution of a benchmarked portfolio given by the ratio
\begin{equation*}
    \hat{S}^\delta_t = \frac{S^\delta_t}{S^{\delta_*}_t}.
\end{equation*}
It\^{o}'s formula provides the SDE
\begin{equation}
    d\hat{S}^\delta_t = (\delta^1_t\hat{S}^1_tb_t - \hat{S}^\delta_t\theta_t)\,dW_t, \label{Eq: Benchmarked Portfolio}
\end{equation}
in terms of $\hat{S}_t^1 = \frac{S^1_t}{S^{\delta_*}_t}$, the benchmarked security price process.

Because no drift is present in \eqref{Eq: Benchmarked Portfolio}, it is clear that benchmarked portfolios form $(\Afiltration,\P)$-local martingales. Thus, by Fatou's lemma, all non-negative portfolios, when benchmarked, are $(\Afiltration,\P)$-supermartingales\footnote{For a simple proof of this classic result, see \citet[Lemma 5.2.3]{platen2006benchmark}. }.

Define a non-negative contingent claim, $V_\tau$, that matures at a stopping time $\tau\in[0,T]$, as an $\Ainfo_\tau$-measurable payoff that possesses a finite expectation when benchmarked. Note that $V_\tau$ need not be square-integrable. Let $S^V_t$ denote a non-negative self-financing portfolio that replicates the claim, i.e., $S^V_\tau = V_\tau$. Then
\begin{equation*}
    \frac{S^V_t}{S^{\delta_*}_t} \geq \E*{\frac{V_T}{S^{\delta_*}_T}\Bigl|\Ainfo_t}
\end{equation*}
holds by the supermartingale property of benchmarked non-negative self-financing  portfolios.

A security price process, equivalent to a self-financing, replicating portfolio, is called \textit{fair} if its benchmarked value forms an $(\Afiltration,\P)$-martingale (in the classical risk-neutral setting, this notion of a fair process is equivalent to that proposed by \cite{geman1995changes}). Under the benchmark approach this leads to the minimal possible price, the desired result for this section.
\begin{thm}[Real-world Pricing]
    \label{Thm: Real-world pricing}
    For any fair security price process, $V = \{V_t,\ t\in[0,T]\}$, $T\in(t,\infty)$, one has the real-world pricing formula
    \begin{equation}
        V_t = S_t^{\delta_*}\E*{\frac{V_T}{S^{\delta_*}_T}\Bigl|\Ainfo_t}.\label{Eq: Real-world pricing}
    \end{equation}
\end{thm}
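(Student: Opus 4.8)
The plan is to read the formula straight off the definition of fairness, which has been arranged precisely so that the supermartingale inequality derived just above collapses to an equality. The only substantive ingredient is the defining property of a fair process: by assumption the benchmarked value $\hat{V}_t \vcentcolon= V_t/S^{\delta_*}_t$ is a genuine $(\Afiltration,\P)$-martingale, and not merely a supermartingale as for a generic non-negative self-financing portfolio.

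First I would fix $t < T$ and apply the martingale property of $\hat{V}$ between these two dates. Since $\hat{V}$ is an $(\Afiltration,\P)$-martingale --- integrability being supplied by the standing assumption that the benchmarked claim possesses a finite expectation --- the defining martingale identity gives
\[
\frac{V_t}{S^{\delta_*}_t} = \E*{\frac{V_T}{S^{\delta_*}_T}\Bigl|\Ainfo_t}.
\]
I would then multiply through by $S^{\delta_*}_t$. Because the GOP value process is strictly positive and adapted, $S^{\delta_*}_t$ is $\Ainfo_t$-measurable and almost surely finite, so it can be carried across the conditional expectation as a known factor, yielding the real-world pricing formula \eqref{Eq: Real-world pricing}.

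The step that genuinely carries the content --- and the only one I would flag as conceptually nontrivial --- is the promotion of the ``$\geq$'' established immediately before the theorem to the ``$=$'' claimed here. This is not a computation but a consequence of the extra hypothesis: it is exactly fairness (martingale) rather than mere non-negativity (supermartingale) that upgrades the inequality to an equality and thereby singles out the minimal price. Everything else is routine, the sole technical caveat being the integrability needed to manipulate the conditional expectation, which the hypotheses already grant.
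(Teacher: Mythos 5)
Your proposal is correct and coincides with the paper's (implicit) argument: the paper gives no separate proof precisely because the formula is immediate from the definition of a fair process as one whose benchmarked value is an $(\Afiltration,\P)$-martingale, followed by multiplication by the $\Ainfo_t$-measurable, strictly positive factor $S^{\delta_*}_t$. Your remark that fairness is exactly what upgrades the supermartingale inequality to an equality is also the point the paper is making in the surrounding discussion.
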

The expectation in Theorem \ref{Thm: Real-world pricing} is taken under the real-world probability measure, $\P$. Under the assumption that one can perform an equivalent probability measure change, following \cite{geman1995changes}, the candidate Radon-Nikodym derivative process to move from the current real-world numeraire-measure pair, $(S^{\delta_*},\P)$, to the risk-neutral numeraire-measure pair, $(S^0, \P_\theta)$, is
\begin{equation}
    \lambda_\theta(t) = \frac{d\P_\theta}{d\P} \Bigl|_{\Ainfo_t} = \frac{S^0_t}{S^0_0}\frac{S^{\delta_*}_0}{S^{\delta_*}_t} = \frac{\hat{S}^0_t}{\hat{S}^0_0}. \label{Eq: RWP Candidate Measure Change}
\end{equation}
If $\lambda_\theta(t)$ is a strictly positive $(\Afiltration,\P)$-martingale (and not a strict local-martingale), then the probability measure change can indeed be performed, yielding
\begin{equation*}
    V_t = S_t^{\delta_*}\E*{\frac{V_T}{S^{\delta_*}_T}\Bigl|\Ainfo_t} = \E*{\frac{\lambda_\theta(T)}{\lambda_\theta(t)} \frac{S^0_t}{S^0_T}V_T\Bigl|\Ainfo_t} = \mathbb{E}_\theta{\left[\frac{S^0_t}{S^0_T}V_T\Bigl|\Ainfo_t\right]}, 
\end{equation*}
by Bayes' theorem, where the last expression is the classical risk-neutral pricing formula.

In this way, risk-neutral pricing is a special case of real-world pricing, applicable only when $\lambda_\theta(t)$ describes a strictly positive $(\Afiltration,\P)$-martingale. This translates into a constraint on the market price of risk $\theta_t$, the volatility of the GOP. This volatility must be specified so that $\lambda_\theta(t)$ is a martingale, which is, for instance, the case if $\theta_t$ satisfies Novikov's condition or, more generally, Kazamaki's condition \citep{revuz2013continuous}. To compute the expectation in Theorem \ref{Thm: Real-world pricing}, the GOP must be modelled explicitly. 
A realistic model for the GOP, which excludes risk-neutral pricing, is presented in the next section.

\subsection{Modelling the GOP}
Consider a simple two-asset market consisting only of the risk-free bank account and the growth-optimal portfolio, obeying \eqref{Eq: Savings Account} and \eqref{Eq: The GOP SDE}, respectively. The \emph{discounted} GOP is then described by
\begin{align*}
 \dGOP_t &= \frac{\GOP_t}{S^0_t}, \\
\intertext{which means that}
d\dGOP_t &= \dGOP_t(\theta_t^2\, dt + \theta_t\, dW_t).
\end{align*}
The $\delta_*$-superscript has been dropped from the GOP notation for simplicity. To compute the expectation in \eqref{Eq: Real-world pricing}, the MPOR process, $\theta_t$, must be modelled explicitly.

\cite{baldeaux2014tractable} propose the TCEV model for the GOP. It is parsimonious, tractable, reliably estimated and can provide explicit formulae for various derivatives and their hedge ratios. The TCEV model is specified by
\[ \theta_t \vcentcolon= c\left(\frac{\dGOP_t}{\alpha_t}\right)^{a - 1} \qquad \mathrm{and} \qquad \alpha_t \vcentcolon= \alpha_0e^{\eta t}. \]
Here the parameter restrictions are $c>0,\ a\in(-\infty,1),\ \alpha_0>0$ and $\eta>0$. Note that the TCEV model generalizes both the minimal market model (MMM), first introduced by \citet{Platen2001minimal} and analyzed in detail by \citet{miller2008analytic}, and the modified constant elasticity of variance model (MCEV) from \citet{heath2002consistent}.

By direct substitution, the SDE for the discounted GOP under the TCEV model is
\begin{equation}
d\dGOP_t = c^2\alpha_t^{2(1 - a)}\left(\dGOP_t\right)^{2a - 1}\, dt + c\alpha_t^{1-a}\left(\dGOP_t\right)^a\, dW_t. \label{Eq: DGOP TCEV}
\end{equation}
The behaviour of this process is described by Proposition \ref{Prop: DGOP to BESQ} below, which appears in a slightly modified form in \cite{baldeaux2014tractable}.

\begin{prop}
    \label{Prop: DGOP to BESQ}
    The process $\dGOP = \{\dGOP_t,\ t\geq0\}$ satisfies the following equality in distribution
    \begin{equation*}
    \dGOP_t \overset{(d)}{=}  X_{\varphi(t)}^{\frac{1}{2(1 - a)}} = X_{\varphi(t)}^{\left(\frac{\delta}{2} - 1\right)},
    \end{equation*}
    where $X = \{X_\varphi,\ \varphi\geq0\}$ is a squared Bessel process of dimension $\delta = \frac{3 - 2a}{1 - a}$ in $\varphi$-time and the time-transformation is given by
    \begin{equation*}
    \varphi(t) = \frac{(1 - a)\alpha_0^{2(1-a)}c^2}{2\eta}\left(e^{2(1 - a)\eta t} - 1\right).
    \end{equation*}
\end{prop}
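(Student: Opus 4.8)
The plan is to reduce the time-inhomogeneous, state-dependent diffusion \eqref{Eq: DGOP TCEV} to a squared Bessel process by combining a power transformation of the state with a \emph{deterministic} change of clock. Since the asserted identity reads $\dGOP_t = X_{\varphi(t)}^{1/(2(1-a))}$, it suffices to exhibit a process $Y_t := (\dGOP_t)^{2(1-a)}$ which, after a deterministic time change, is a squared Bessel process of dimension $\delta = (3-2a)/(1-a)$; note in passing that $1/(2(1-a)) = \delta/2 - 1$, which reconciles the two exponents in the statement. Throughout, $\dGOP_t>0$ (it lies in $\mathcal{V}^+$), so the fractional powers and the square root below are well defined.

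First I would set $p := 2(1-a)$ and apply It\^o's formula to $Y_t = (\dGOP_t)^p$ using \eqref{Eq: DGOP TCEV}. The CEV structure is engineered so that the powers of $\dGOP_t$ collapse: the first-order term inherits the exponent $(p-1)+(2a-1)=p+2a-2=0$ and the It\^o second-order term inherits $(p-2)+2a=p+2a-2=0$, so both reduce to deterministic multiples of $c^2\alpha_t^{2(1-a)}$, whereas the martingale term inherits $(p-1)+a=1-a$, i.e.\ it equals $pc\,\alpha_t^{1-a}(\dGOP_t)^{1-a}=pc\,\alpha_t^{1-a}\sqrt{Y_t}$. Collecting terms gives
\begin{equation*}
    dY_t = \tfrac12 p(p+1)\,c^2\alpha_t^{2(1-a)}\,dt + pc\,\alpha_t^{1-a}\sqrt{Y_t}\,dW_t,
\end{equation*}
a diffusion whose drift is purely time-dependent and whose diffusion coefficient is $\sqrt{Y_t}$ up to a deterministic factor --- exactly the shape of a time-changed squared Bessel process.

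Next I would fix the clock $\varphi$ by matching quadratic variations: since $d\langle Y\rangle_t = p^2c^2\alpha_t^{2(1-a)}Y_t\,dt$ and a squared Bessel process $X$ satisfies $d\langle X\rangle = 4X\,d\varphi$, the ansatz $Y_t = X_{\varphi(t)}$ forces $4Y_t\varphi'(t) = p^2c^2\alpha_t^{2(1-a)}Y_t$; the crucial point is that the state factor $Y_t$ cancels, leaving the \emph{deterministic} prescription $\varphi'(t) = \tfrac14 p^2 c^2\alpha_t^{2(1-a)} = (1-a)^2c^2\alpha_0^{2(1-a)}e^{2(1-a)\eta t}$, whose integral from $\varphi(0)=0$ reproduces the stated $\varphi(t)$. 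Writing $\tilde X_{\varphi(t)} := Y_t$ and rescaling, the drift becomes $\tfrac12 p(p+1)c^2\alpha_t^{2(1-a)}/\varphi'(t) = p(p+1)/(2(1-a)^2) = (3-2a)/(1-a) = \delta$, and the diffusion coefficient becomes $pc\,\alpha_t^{1-a}\sqrt{Y_t}/\sqrt{\varphi'(t)} = (p/(1-a))\sqrt{\tilde X} = 2\sqrt{\tilde X}$, so that $\tilde X$ solves $d\tilde X_\varphi = \delta\,d\varphi + 2\sqrt{\tilde X_\varphi}\,dB_\varphi$, the defining SDE of a squared Bessel process of dimension $\delta$. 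The claim $\dGOP_t = \tilde X_{\varphi(t)}^{1/(2(1-a))}$ then follows by inverting the power transformation.

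The one genuinely technical point is the time change itself: re-expressing the equation for $Y$ in the clock $\tau=\varphi(t)$ requires the driving noise to become a bona fide Brownian motion $B$ in $\tau$. Because $\varphi$ is deterministic, continuous and strictly increasing with $\varphi(0)=0$ (here $\eta>0$ and $a<1$ guarantee strict monotonicity), one simply sets $B_\tau := \int_0^{\varphi^{-1}(\tau)}\sqrt{\varphi'(s)}\,dW_s$; a direct computation gives $\langle B\rangle_\tau = \tau$, so $B$ is a Brownian motion by L\'evy's characterization, and substitution turns the $t$-SDE for $Y$ into the $\tau$-SDE for $\tilde X$. Because the clock is deterministic, no Dambis--Dubins--Schwarz argument with a random clock is needed --- the cancellation of $Y_t$ above is precisely what makes this work. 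Finally I would remark that $\delta = (3-2a)/(1-a) > 2$ for every $a<1$, so the squared Bessel process never reaches the origin, consistent with the a priori positivity $\dGOP\in\mathcal{V}^+$ and with the well-definedness of the fractional powers used at the outset.
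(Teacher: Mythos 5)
Your argument is correct: the Ito computation for $Y_t=(\dGOP_t)^{2(1-a)}$, the resulting drift $\tfrac12 p(p+1)c^2\alpha_t^{2(1-a)}$, the clock $\varphi'(t)=\tfrac14p^2c^2\alpha_t^{2(1-a)}$, and the normalized dimension $2(p+1)/p=(3-2a)/(1-a)=\delta$ all check out, and your Levy-characterization justification of the deterministic time change is sound. The route differs from the paper's in one structural respect. The paper normalizes first, setting $Y_t=\bigl(\tfrac{\alpha_0}{\alpha_t}\dGOP_t\bigr)^{2(1-a)}$ so that the extra factor $e^{-2(1-a)\eta t}$ absorbs the time-inhomogeneity and $Y$ becomes a time-homogeneous CIR process; it then cites the standard CIR-to-BESQ correspondence (Proposition~\ref{Prop: CIR to BESQ}) as a black box, and undoes the normalization at the end, where the prefactors $e^{\eta t}$ and $e^{-2(1-a)\eta t/(2(1-a))}$ cancel. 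You instead work with the unnormalized power $(\dGOP_t)^{2(1-a)}$, accept a time-inhomogeneous SDE whose drift is purely deterministic in time, and carry out the deterministic clock change by hand. What the paper's route buys is brevity and reuse of a catalogued lemma; what yours buys is self-containedness (no appeal to the appendix lemma, whose proof is itself a time change of exactly this kind) and the elimination of the slightly magical cancellation of exponential prefactors in the final step, since your identity $Y_t=\tilde X_{\varphi(t)}$ delivers the claimed power relation directly. One cosmetic point: you actually establish a pathwise identity for a particular BESQ process constructed from $W$, which is stronger than the equality in distribution asserted in the proposition, so nothing is lost.
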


\begin{proof}
Define $Y_t= \left(\tfrac{\alpha_0}{\alpha_t}\dGOP_t\right)^{2(1 - a)}$ such that
\begin{align*}
dY_t &= \left(\alpha_0^{2(1 - a)}c^2(1 - a)(3 - 2a) - 2(1 - a)\eta Y_t\right)\, dt + 2(1 - a)\alpha_0^{1 - a}c\sqrt{Y_t}\, dW_t.
\end{align*}
This is a CIR-process and thus by Proposition \ref{Prop: CIR to BESQ}, in Appendix \ref{App: Squared Bessel Processes},
\[ Y_t = e^{-2(1 - a)\eta t}X_{\varphi({t})}.\]
Since $\dGOP_t = e^{\eta t}Y_t^{\frac{1}{2(1-a)}},$ this completes the proof.
\end{proof}

An immediate consequence of the relationship between the discounted GOP and a squared Bessel process (BESQ) of dimension $\delta = \frac{3 - 2a}{1 - a} > 2$, is that the discounted GOP never attains zero (see Appendix \ref{Sec: BEQ Transition Density}). A more subtle consequence is that modelling the GOP portfolio in this way precludes the existence of an equivalent risk-neutral probability measure.

As in Section \ref{Sec: Real-world Pricing}, the candidate Radon-Nikodym derivative process to move to the risk-neutral measure is given by \eqref{Eq: RWP Candidate Measure Change}.
However, now
\begin{equation*}
\hat{S}^0_t = \frac{1}{\dGOP_t} = X_{\varphi(t)}^{1 - \frac{\delta}{2}},
\end{equation*}
and from the symmetry relationship derived for BESQ processes in Appendix \ref{Sec: BEQ Transition Density}, the process on the right-hand side of the above expression is a \emph{strict} local martingale. Thus, the `risk-neutral measure' induced by this Radon-Nikodym derivative process will not be a probability measure.

Alternatively, consider the existence of a measure, $\P_\theta$, such that the discounted GOP is potentially a martingale under this measure,
\begin{equation}
d\dGOP_t =  c\alpha_t^{1-a}\left(\dGOP_t\right)^a\, dW^\theta_t. \label{Eq: RN dGOP}
\end{equation}
Using the same transformation as in the proof of Proposition \ref{Prop: DGOP to BESQ}, $Y_t= \left(\tfrac{\alpha_0}{\alpha_t}\dGOP_t\right)^{2(1 - a)}$, it is straightforward to show that under this hypothetical measure the discounted GOP is the power of a BESQ process of dimension $\delta_\theta = \tfrac{1 - 2a}{1 - a} < 2$. As this process has a non-zero probability of attaining zero in finite time, the measure $\P_\theta$ cannot be equivalent to $\P$, the original real-world probability measure under which $X$ has dimension $\delta = \frac{3 - 2a}{1 - a} > 2$ and never hits zero.

\section{Option Pricing}
\label{Sec: Option Pricing}
In this section, recursive marginal quantization is used to provide fast and accurate pricing for the benchmark approach.
RMQ was introduced by \cite{pages2015recursive}, and extended to higher-order schemes by \cite{mcwalter2018}. 

Initially, analytic European option prices are derived under the assumption of constant interest rates. These formulae generalize those found in \cite{miller2008analytic,miller2010real} for the minimal market and modified constant elasticity of variance models, respectively. Although these formulae are analytic, they can be numerically expensive to compute and are contrasted to the fast, but approximate, prices obtained via RMQ. Furthermore, Bermudan options on the GOP are priced using traditional Monte Carlo methods, with their accuracy, speed and efficiency compared with that of recursive marginal quantization.

The second subsection deals with the hybrid model introduced by \cite{baldeaux2015hybrid}. The hybrid model combines the TCEV model for the GOP with a 3/2 stochastic short-rate model. \cite{baldeaux2015hybrid} derive analytic zero-coupon bond prices under the assumption that the GOP is independent of the short rate. In this section, numerical experiments show that these prices are well approximated with RMQ. The effect of the independence assumption on the zero-coupon bond prices is also investigated. Lastly, European options on zero-coupon bonds are priced using both traditional Monte Carlo methods as well as RMQ.


All simulations were performed using MATLAB 2016b on a computer with a $2.00$ GHz Intel i-$3$ processor and $4$ GB of RAM. 

\subsection{Constant Short Rates}

Expressions similar to those derived in Propositions \ref{Prop: Put Option} and \ref{Prop: Call Option} below appear in \cite{baldeaux2014tractable}, where the strike was selected to be a constant multiple of the savings account. In this way, \cite{baldeaux2014tractable} were able to avoid specifying a model for the short rate by restricting the class of strikes they considered.

\begin{prop}
    \label{Prop: Put Option}
    Assuming a constant interest rate $r$, the real-world price, $p_{T,K}(t, \GOP_t)$, of a European put option on the GOP at time $t$ with expiry $T$ and strike $K$ is given by
    \begin{align*}
    p_{T,K}(t, \GOP_t) &= -\dGOP_t \beta(t) \chi'^{\,2}\left(\frac{\widetilde{K}}{\Delta \varphi}; \delta,\frac{(\dGOP_t)^{2(1 - a)}}{\Delta \varphi}  \right) \\
      &\qquad + K \frac{\beta(t)}{\beta(T)}\left[\chi'^{\,2}\left(\frac{(\dGOP_t)^{2(1 - a)}}{\Delta \varphi}; \delta - 2, 0\right) - \chi'^{\,2}\left(\frac{(\dGOP_t)^{2(1 - a)}}{\Delta \varphi}; \delta - 2, \frac{\widetilde{K}}{\Delta \varphi}\right) \right],
    \end{align*}
    where
    \begin{align*}
    \widetilde{K} &= \left(\frac{K}{\beta(T)}\right)^{2(1 - a)},& \beta(t) &= \exp(rt),\\
    \delta &= \frac{3-2a}{1 - a},& \Delta\varphi&= \varphi(T) - \varphi(t),
    \end{align*}
    with $\chi'^{\,2}(x; \delta, \lambda)$ representing the noncentral chi-squared distribution, evaluated at $x$ with degrees of freedom $\delta$ and non-centrality parameter $\lambda$, and $\varphi(t)$ is defined in Proposition \ref{Prop: DGOP to BESQ}.
\end{prop}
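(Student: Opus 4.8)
The plan is to apply the real-world pricing formula of Theorem~\ref{Thm: Real-world pricing} to the put payoff $V_T = (K - \GOP_T)^+$, which gives $p_{T,K}(t,\GOP_t) = \GOP_t\,\E*{(K-\GOP_T)^+/\GOP_T \mid \Ainfo_t}$. Writing everything in terms of the discounted GOP through $\GOP_s = \beta(s)\dGOP_s$ and expanding the benchmarked payoff as $(K-\GOP_T)^+/\GOP_T = \bigl(K/(\beta(T)\dGOP_T) - 1\bigr)\ind{\dGOP_T < K/\beta(T)}$, I would split the price into two conditional expectations: a \emph{weighted} term carrying the factor $1/\dGOP_T$ on the in-the-money event, and a \emph{plain} term carrying only the indicator $\ind{\dGOP_T < K/\beta(T)}$.

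Both expectations would then be evaluated using Proposition~\ref{Prop: DGOP to BESQ} together with the Markov property: conditionally on $\dGOP_t$, the quantity $X_{\varphi(T)} = (\dGOP_T)^{2(1-a)}$ is a squared Bessel process of dimension $\delta$ observed after $\varphi$-elapsed time $\Delta\varphi$, started from $x_0 := (\dGOP_t)^{2(1-a)}$. Two algebraic identities do the bookkeeping: the Bessel index $\nu = \delta/2 - 1 = 1/(2(1-a))$ and $\delta - 2 = 1/(1-a)$, so that $1/\dGOP_T = X_{\varphi(T)}^{-\nu}$ and the threshold $\{\dGOP_T < K/\beta(T)\}$ becomes $\{X_{\varphi(T)} < \widetilde{K}\}$ with $\widetilde{K} = (K/\beta(T))^{2(1-a)}$. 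Using the explicit BESQ transition density (equivalently, the known rescaling relating it to the noncentral chi-squared law from Appendix~\ref{Sec: BEQ Transition Density}), the plain term is immediately the distribution function $\chi'^{\,2}\bigl(\widetilde{K}/\Delta\varphi;\,\delta,\,x_0/\Delta\varphi\bigr)$, which reproduces the first line of the stated formula after multiplication by $\beta(t)\dGOP_t$.

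The weighted term is where the real work lies, and I expect it to be the main obstacle. The goal is to show that $\E*{X_{\varphi(T)}^{-\nu}\ind{X_{\varphi(T)} < \widetilde{K}} \mid \Ainfo_t}$ collapses onto \emph{shifted-dimension} chi-squared functions. I would exploit the symmetry of the BESQ transition density $q_s^{(\delta)}(x_0,y) = (y/x_0)^{\nu}\,q_s^{(\delta)}(y,x_0)$, which converts the awkward weight $y^{-\nu}$ into the constant $x_0^{-\nu}$ at the cost of interchanging the roles of the start- and end-point. After rescaling, the surviving integral is of the noncentral chi-squared density $f_{\chi'^{\,2}}$ in its non-centrality argument, which I would evaluate using the identity $\partial_\lambda \chi'^{\,2}(z;\delta-2,\lambda) = -f_{\chi'^{\,2}}(z;\delta,\lambda)$; this is precisely the mechanism that drops the degrees of freedom from $\delta$ to $\delta - 2$ and produces the bracketed difference of two $\chi'^{\,2}(\,\cdot\,;\delta-2,\cdot)$ terms.

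Finally I would assemble the two pieces. The prefactor $x_0^{-\nu} = (\dGOP_t)^{-2(1-a)\nu} = 1/\dGOP_t$ cancels against the leading $\dGOP_t$, so that multiplying the weighted term by $\beta(t)\dGOP_t \cdot K/\beta(T)$ yields exactly the $K\,\beta(t)/\beta(T)$ line of the proposition, completing the derivation. Care is needed only with the normalising constants in the density-to-distribution conversions and with confirming integrability of the $1/\dGOP_T$ weight near the origin, which is guaranteed since $\delta > 2$ keeps $\dGOP$ away from zero.
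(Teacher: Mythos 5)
Your proposal is correct and takes essentially the same route as the paper's proof: the same split of the benchmarked payoff into a plain indicator term and a $1/\dGOP_T$-weighted term, the same use of Proposition~\ref{Prop: DGOP to BESQ} and the BESQ symmetry relation \eqref{Eq: Symmetry} to move the weight onto the starting point, and the same chi-squared bookkeeping (your noncentrality-derivative identity is just Schr\"{o}der's tail-integral formula \eqref{Eq: Schroder}, together with \eqref{Eq: Staying Pos Prob}, in differentiated form). No gaps.
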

\begin{proof}
    When the short rate is constant, the savings account is deterministic, with
    \[S^0_t = \exp(rt) =\vcentcolon \beta(t) . \]
    As is standard, the expectation of the numeraire denominated payoff can be expressed as the difference of two expectations,
    \begin{align*}
    p_{T,K}(t, \GOP_t) &= \E*{\frac{\GOP_t}{\GOP_T}(K - \GOP_T)^+\Bigl|{\Ainfo_t}} \\
    &= \E*{\frac{\dGOP_t}{\dGOP_T} \beta(t) \left(\frac{K}{\beta(T)} - \dGOP_T\right)^+\Bigl|{\Ainfo_t}} \\
    &= \dGOP_t \frac{\beta(t)}{\beta(T)} K \E*{\frac{1}{\dGOP_T}\ind{\frac{K}{\beta(T)}> \dGOP_T} \Bigl|{\Ainfo_t}}- \dGOP_t\beta(t)\E*{\ind{\frac{K}{\beta(T)}> \dGOP_T} \Bigl|{\Ainfo_t}}.
    \end{align*}

    Using Proposition \ref{Prop: DGOP to BESQ}, the first expectation can be rewritten in terms of the power of a squared Bessel process of dimension $\delta = \frac{3 - 2a}{1 - a} > 2$,
    \begin{align*}
    \E*{\frac{1}{\dGOP_t}\ind{\frac{K}{\beta(T)}> \dGOP_T} \Bigl|{\Ainfo_t}  }&= \E*{X^{-\left({\frac{\delta}{2}-1}\right)}_{\varphi(T)} \ind{\widetilde{K} > X_{\varphi(T)}} \Bigl|{\Ainfo_t}} \\
    &= \int_0^{\widetilde{K}} X^{-\left({\frac{\delta}{2}-1}\right)} p_{\delta>2}(X, \varphi(T); X_{\varphi(t)})\, dX,
    \end{align*}
    with the transition density, $p_{\delta>2}$, given by \eqref{Eq: Density dg2} in Appendix \ref{App: Squared Bessel Processes}. Now, the symmetry relationship, \eqref{Eq: Symmetry} in Appendix \ref{App: Squared Bessel Processes}, can be applied to yield
    \begin{align*}
    \int_0^{\widetilde{K}} X^{-\left({\frac{\delta}{2}-1}\right)} p_{\delta>2}(X, \varphi(T); X_{\varphi(t)})\, dX &= X^{-\left({\frac{\delta}{2}-1}\right)}_{\varphi(t)} \int_0^{\widetilde{K}}  p_{4 - \delta}(X, \varphi(T); X_{\varphi(t)})\, dX,
    \end{align*}
    where the final density to be integrated is the norm-decreasing density given by \eqref{Eq: Density dl2}. 
    The bounds can be rewritten as
    \begin{align*}
    &X^{-\left({\frac{\delta}{2}-1}\right)}_{\varphi(t)} \int_0^{\widetilde{K}}  p_{4 - \delta}(X, \varphi(T); X_{\varphi(t)})\, dX = \\
    &\qquad\qquad\qquad \frac{1}{\dGOP_t}\left[ \int_0^{\infty}  p_{4 - \delta}(X, \varphi(T); X_{\varphi(t)})\, dX  - \int_{\widetilde{K}}^\infty  p_{4 - \delta}(X, \varphi(T); X_{\varphi(t)})\, dX\right],
    \end{align*}
    and using \eqref{Eq: Schroder} and \eqref{Eq: Staying Pos Prob} from the Appendix gives
    \begin{equation*}
    \E*{\frac{1}{\dGOP_t}\ind{\frac{K}{\beta(T)}> \dGOP_T} \Bigl|{\Ainfo_t}} = \frac{1}{\dGOP_t}\left[\chi'^{\,2}\left(\frac{(\dGOP_t)^{2(1 - a)}}{\Delta \varphi}; \delta - 2, 0\right) - \chi'^{\,2}\left(\frac{(\dGOP_t)^{2(1 - a)}}{\Delta \varphi}; \delta - 2, \frac{\widetilde{K}}{\Delta \varphi}\right) \right].
    \end{equation*}
    Similarly, the second expectation can be computed directly from the transition density \eqref{Eq: Density dg2},
    \begin{align*}
    \E*{\ind{\frac{K}{\beta(T)}> \dGOP_T} \Bigl|{\Ainfo_t} } &= \int_0^{\widetilde{K}} p_{\delta>2}(X, \varphi(T); X_{\varphi(t)})\, dX \\
    &= \chi'^{\,2}\left(\frac{\widetilde{K}}{\Delta \varphi}; \delta,\frac{(\dGOP_t)^{2(1 - a)}}{\Delta \varphi}  \right).
    \end{align*}


\end{proof}

The analytic expression for the European call option can be derived in the same way, but the derivation avoids the norm-decreasing density required above.
\begin{prop}
    \label{Prop: Call Option}
    Assuming a constant interest rate $r$, the real-world price, $c_{T,K}(t, \GOP_t)$, of a European call option on the GOP at time $t$ with expiry $T$ and strike $K$ is given by
    \begin{align*}
    c_{T,K}(t, \GOP_t) = \dGOP_t \beta(t) \left[1 - \chi'^{\,2}\left(\frac{\widetilde{K}}{\Delta \varphi}; \delta, \frac{(\dGOP_t)^{2(1 - a)}}{\Delta \varphi}\right) \right]
      - K \frac{\beta(t)}{\beta(T)}\chi'^{\,2}\left(\frac{(\dGOP_t)^{2(1 - a)}}{\Delta \varphi}; \delta - 2,  \frac{\widetilde{K}}{\Delta \varphi}\right),
    \end{align*}
    with all definitions as in Proposition \ref{Prop: Put Option}.    
\end{prop}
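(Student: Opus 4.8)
The plan is to mirror the put-option derivation of Proposition \ref{Prop: Put Option}, applying the real-world pricing formula \eqref{Eq: Real-world pricing} to the call payoff and then invoking Proposition \ref{Prop: DGOP to BESQ} to reduce everything to squared Bessel integrals. First I would write $c_{T,K}(t,\GOP_t) = \E*{\tfrac{\GOP_t}{\GOP_T}(\GOP_T - K)^+\Bigl|\Ainfo_t}$, pass to the discounted GOP via $\GOP_s = \beta(s)\dGOP_s$, and split the positive part across the in-the-money event $\{\dGOP_T > K/\beta(T)\}$. This produces two expectations, exactly as in the put case but with the complementary indicator:
\begin{align*}
c_{T,K}(t,\GOP_t) &= \dGOP_t\beta(t)\,\E*{\ind{\dGOP_T > K/\beta(T)}\Bigl|\Ainfo_t} \\
&\quad - \dGOP_t\beta(t)\frac{K}{\beta(T)}\,\E*{\frac{1}{\dGOP_T}\ind{\dGOP_T > K/\beta(T)}\Bigl|\Ainfo_t}.
\end{align*}

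Next I would evaluate the first (``delta'') expectation directly. Since $\dGOP_T \overset{(d)}{=} X_{\varphi(T)}^{\delta/2-1}$ with $\delta/2-1 = \tfrac{1}{2(1-a)} > 0$, the in-the-money event is equivalent to $\{X_{\varphi(T)} > \widetilde{K}\}$, so integrating the transition density $p_{\delta>2}$ of \eqref{Eq: Density dg2} over $[\widetilde{K},\infty)$ gives the upper-tail probability $1 - \chi'^{\,2}(\widetilde{K}/\Delta\varphi;\delta,(\dGOP_t)^{2(1-a)}/\Delta\varphi)$, using $X_{\varphi(t)} = (\dGOP_t)^{2(1-a)}$. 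This yields the first term of the stated formula with no recourse to any duality.

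For the second expectation I would proceed as in the put proof: rewrite $1/\dGOP_T = X_{\varphi(T)}^{-(\delta/2-1)}$, apply the symmetry relationship \eqref{Eq: Symmetry} to factor out $X_{\varphi(t)}^{-(\delta/2-1)} = 1/\dGOP_t$ and replace the integrand by the norm-decreasing density $p_{4-\delta}$ of \eqref{Eq: Density dl2}. The crucial difference from the put is that the resulting integral now runs over $[\widetilde{K},\infty)$ rather than $[0,\widetilde{K}]$. The Schröder identity \eqref{Eq: Schroder} maps this upper tail directly onto the single noncentral chi-squared $\chi'^{\,2}((\dGOP_t)^{2(1-a)}/\Delta\varphi;\delta-2,\widetilde{K}/\Delta\varphi)$, with the spatial argument and the non-centrality parameter interchanged. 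Substituting and cancelling $1/\dGOP_t$ against the prefactor delivers the second term, and combining the two pieces completes the proof.

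The main obstacle — and the reason the call is cleaner than the put — is the treatment of this upper-tail integral of $p_{4-\delta}$. Because $4 - \delta < 2$, the associated process can be absorbed at the origin, so $p_{4-\delta}$ is sub-probabilistic and $\int_0^\infty p_{4-\delta}\,dX$ equals the staying-positive probability $\chi'^{\,2}(\,\cdot\,;\delta-2,0) < 1$, not one. For the put this forces the lower-tail integral to be written as the difference of the total mass and a Schröder term, which is where the norm-decreasing density visibly intrudes. For the call the upper tail maps to a \emph{single} Schröder term with no reference to the total mass, so that normalization never enters. I would verify this cancellation carefully via \eqref{Eq: Schroder} and \eqref{Eq: Staying Pos Prob}, confirming that $\int_{\widetilde{K}}^\infty p_{4-\delta}\,dX = \chi'^{\,2}((\dGOP_t)^{2(1-a)}/\Delta\varphi;\delta-2,\widetilde{K}/\Delta\varphi)$, which is precisely the complement within the staying-positive mass of the lower-tail expression appearing in Proposition \ref{Prop: Put Option}.
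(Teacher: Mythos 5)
Your proof is correct and follows exactly the route the paper indicates for this proposition (the paper omits the details, stating only that the call can be ``derived in the same way'' as the put): the same payoff decomposition, reduction to squared Bessel integrals via Proposition \ref{Prop: DGOP to BESQ}, and the symmetry relation \eqref{Eq: Symmetry} together with the Schr\"oder identity \eqref{Eq: Schroder}. Your closing observation --- that the upper-tail integral yields a single Schr\"oder term so the norm-decreasing total mass \eqref{Eq: Staying Pos Prob} never enters --- is precisely the simplification the paper alludes to when it remarks that the call derivation avoids the norm-decreasing density.
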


\begin{figure}[t]
    \begin{center}
        \includegraphics[width=\columnwidth]{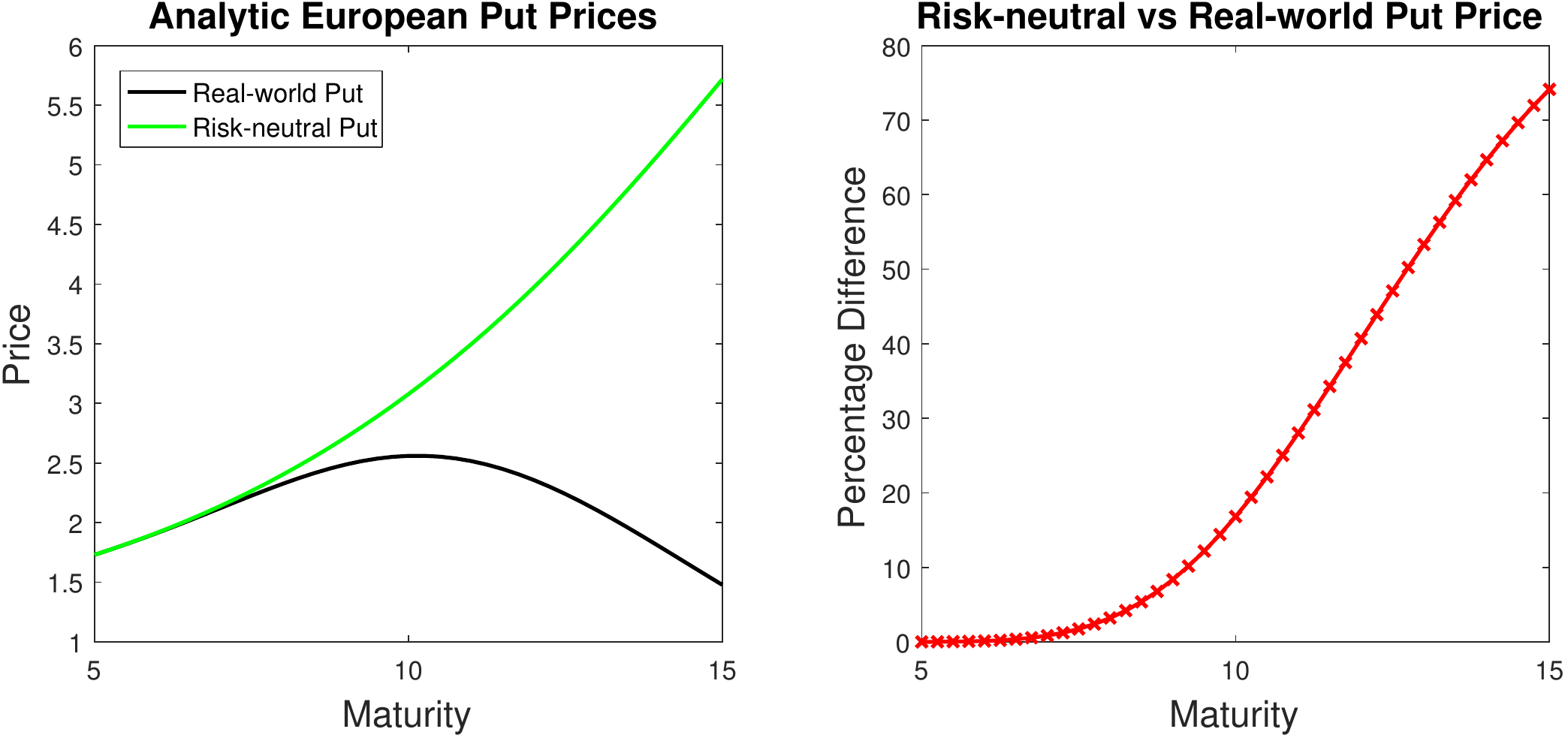}
    \end{center}
   \caption{Comparison of risk-neutral and real-world prices obtained for at-the-money European put options with maturities out to $15$ years.}
    \label{Fig: European put options}
\end{figure}

\begin{figure}[t]
    \begin{center}
        \includegraphics[width=\columnwidth]{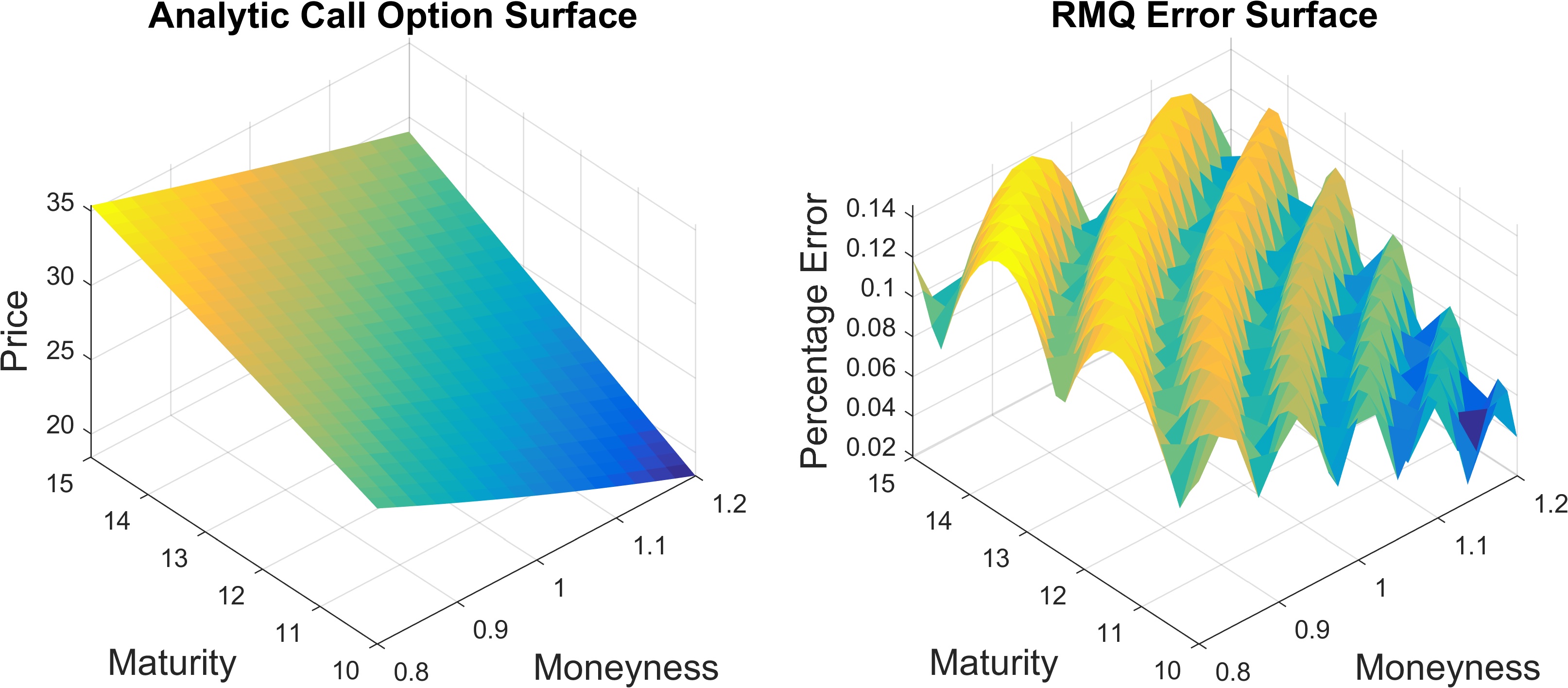}
    \end{center}
   \caption{European call option price surface with RMQ pricing error.}
    \label{Fig: European call options}
\end{figure}

For comparison, assume a hypothetical risk-neutral measure, $\P_\theta$, with the discounted GOP dynamics under this measure given by \eqref{Eq: RN dGOP}. Although this measure is not equivalent to $\P$, when the strike $K>0$, the risk-neutral call price, denoted $c^{\mathrm{RN}}_{T, K}$, corresponds to the real-world call price given above for $\delta_\theta = \tfrac{1 - 2a}{1 - a}$. This occurs because the measures differ only for values of $\GOP$ around $0$, which the integral in the call pricing problem avoids for positive strikes. However, the risk-neutral put option, denoted $p^{\mathrm{RN}}_{T, K}$, is significantly more expensive than the real-world put option for long time horizons.

To see this, consider the mathematical basis for put-call parity,
\begin{equation}
(K - \GOP_T)^+ = (\GOP_T - K)^+ - \GOP_T + K. \label{Eq: Put Call Basis}
\end{equation}
Taking the expectation under the real-world measure, with the GOP as the numeraire, provides the \emph{fair} put-call parity relationship,
\begin{equation*}
    p_{T,K}(t, \GOP_t) = c_{T, K}(t, \GOP_t) - \GOP_t + K P_T(t, \GOP_t),
\end{equation*}
where the fair or real-world zero coupon bond price is given by
\begin{equation*}
    P_T(t, \GOP_t) = \E*{\frac{\GOP_t}{\GOP_T}\Bigr\rvert \Ainfo_t} = \frac{\beta(t)}{\beta{(T)}}\E*{\frac{\dGOP_t}{\dGOP_T} \Bigr\rvert \Ainfo_t} = \frac{\beta(t)}{\beta{(T)}}\chi'^{\,2}\left(\frac{(\dGOP_t)^{2(1 - a)}}{\Delta \varphi}; \delta - 2, 0\right).
\end{equation*}
Of course, taking the discounted risk-neutral expectation of \eqref{Eq: Put Call Basis} provides the classical put-call parity relationship,
\begin{equation*}
    p^{\mathrm{RN}}_{T,K}(t, \GOP_t) = c^{\mathrm{RN}}_{T, K}(t, \GOP_t) - \GOP_t + K \frac{\beta(t)}{\beta{(T)}}.
\end{equation*}
Since the hypothetical risk-neutral call prices and real-world call prices coincide, and
\[ P_T(t, \GOP_t) \leq \frac{\beta(t)}{\beta{(T)}}, \]
the real-world put option prices must be less than or equal to the risk-neutral put option prices.

Figure \ref{Fig: European put options} illustrates the difference between long-dated at-the-money put options priced using the classical risk-neutral pricing theory and the prices provided by Proposition \ref{Prop: Put Option}, obtained using real-world pricing. The parameters used are taken from \cite{baldeaux2015hybrid}, where they were estimated from empirical data, with $\alpha_0 = 51.34,\ \eta=0.1239,\ c=0.1010$ and $ a = 0.2868$. The initial discounted GOP was set at $50$ and the constant short rate at $5\%$. Maturities are set at bi-monthly intervals from $5$ out to $15$ years.

The left panel of Figure \ref{Fig: European put options} shows how the prices correspond for short maturities, with the risk-neutral put becoming more and more expensive as the maturities lengthen. The right panel of Figure \ref{Fig: European put options} shows the difference between the risk-neutral put and the real-world put as a percentage of the classical risk-neutral option price. At a maturity of $15$ years, the real-world put option is $70\%$ less expensive to purchase. 

As a first example of RMQ, Figure \ref{Fig: European call options} shows an analytic European call option pricing surface along with the RMQ pricing error.  Moneyness is varied by changing the strike over the initial GOP value, and maturities are set at monthly intervals from $10$ to $15$ years.

The weak order 2.0 RMQ scheme was used \citep{mcwalter2018} with $12$ time-steps per year and $50$ codewords held constant across time. The final error is under $0.15\%$ irrespective of maturity and moneyness. The error oscillates across moneyness, as is expected for a tree-type method. Computing the RMQ grid out to $15$ years takes less than $1$ second.

\begin{figure}
    \begin{center}
        \includegraphics[width=\columnwidth]{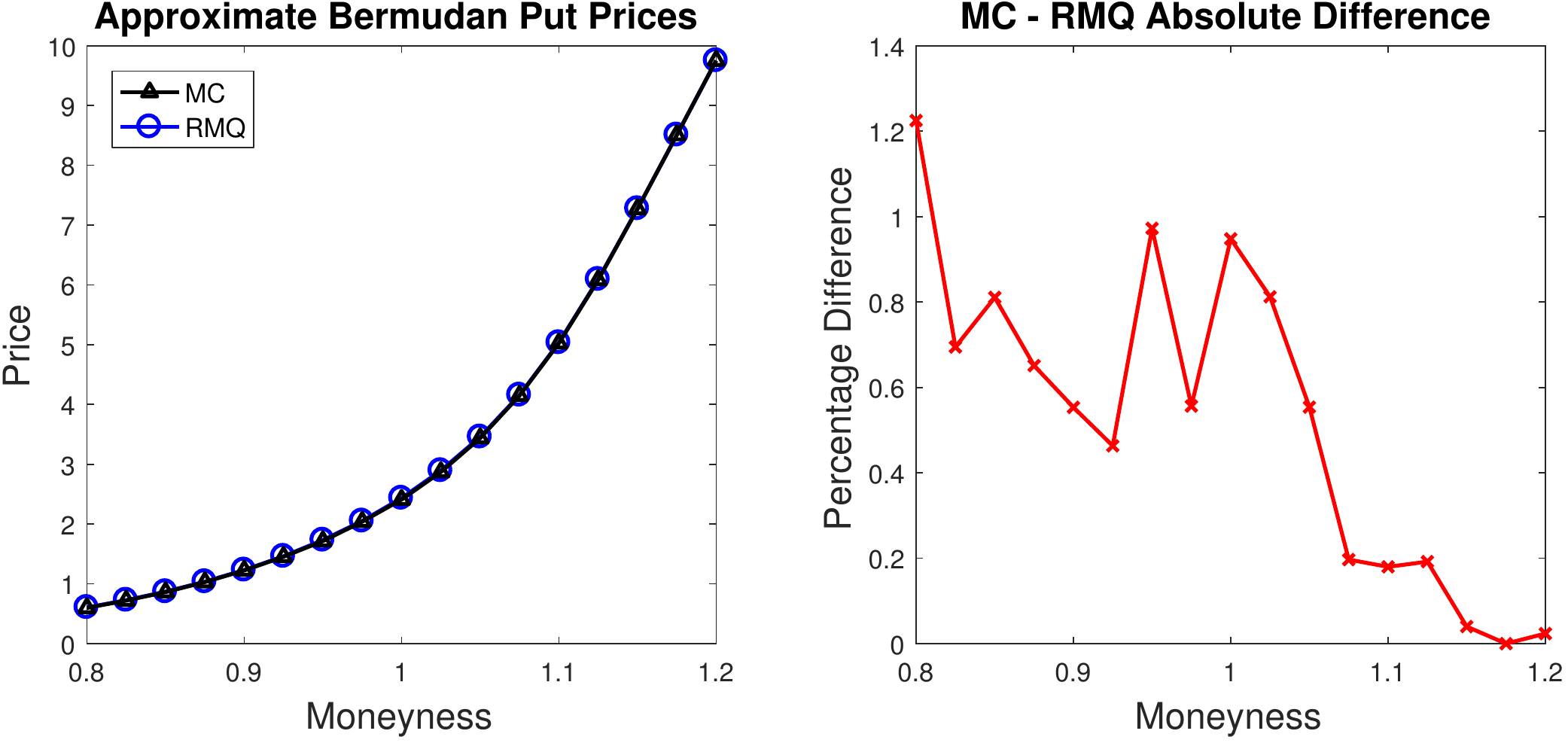}
    \end{center}
   \caption{Bermudan put options priced using least-squares Monte Carlo and RMQ.}
    \label{Fig: Bermudan put options}
\end{figure}

In Figure \ref{Fig: Bermudan put options}, Bermudan put options on the GOP with a maturity of $5$ years and monthly exercise opportunities are priced using both a least-squares Monte Carlo simulation and RMQ. The Monte Carlo simulation is a $500\,000$ path long-step simulation, using the exact transition density of the discounted GOP. The RMQ algorithm is again the weak order 2.0 scheme with $12$ time-steps and $100$ codewords. The Monte Carlo algorithm takes approximately $14.9$ seconds per strike, whereas the RMQ algorithm takes only $0.5$ seconds to price all strikes. The maximum difference between the two methods is $1.4\%$ of the price in the worst case. Thus, the methods agree very well across strikes with the RMQ algorithm being significantly faster to compute. Note that the RMQ results may well be more accurate than the least-squares Monte Carlo results for low moneyness, as the Monte Carlo simulation may be unreliable for deep out-the-money options.

\subsection{Stochastic Short Rates}
\begin{figure}
    \begin{center}
        \includegraphics[width=\columnwidth]{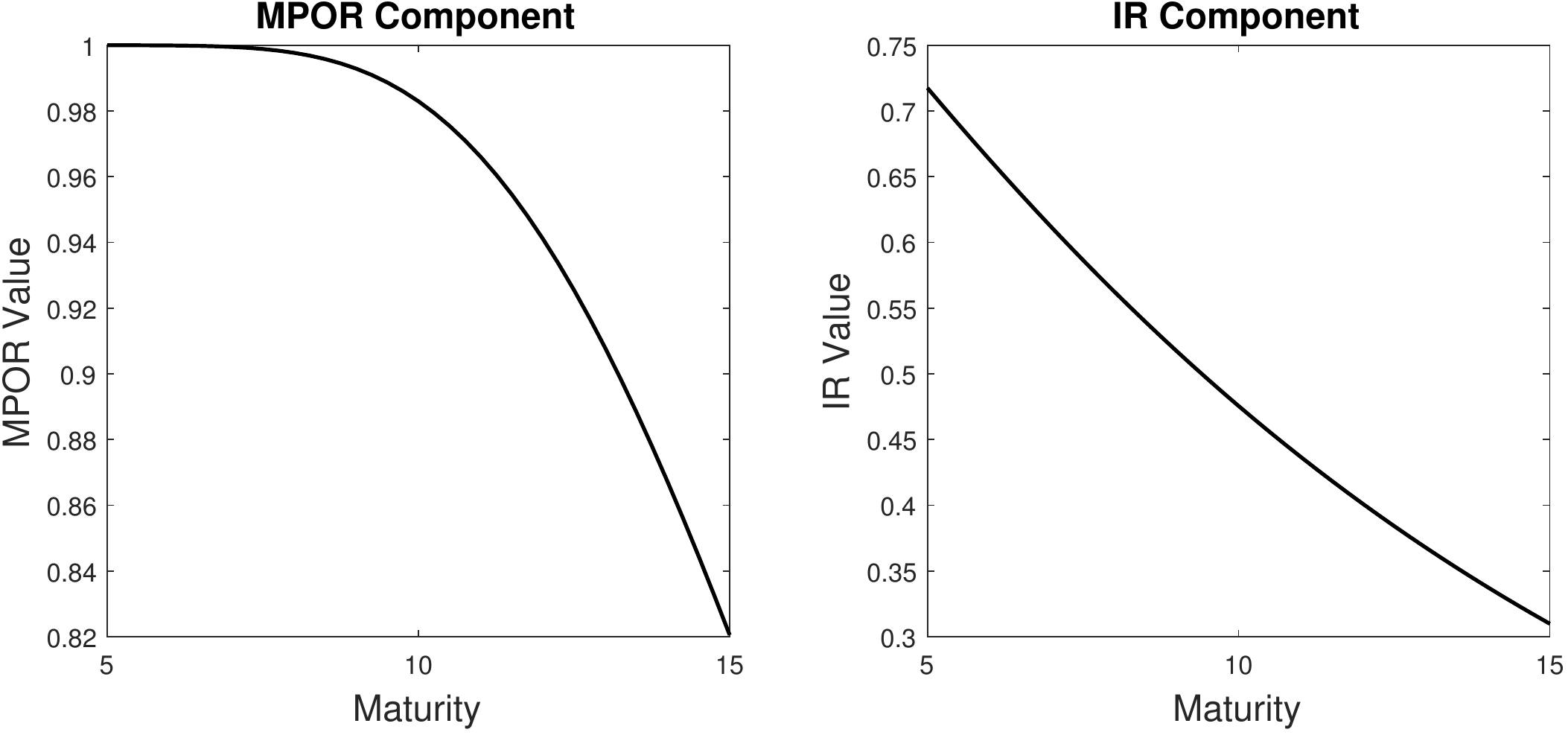}
    \end{center}
   \caption{The analytic MPOR and IR components of the fair zero-coupon bond price under the hybrid model.}
    \label{Fig: MPOR IR}
\end{figure}

In this subsection, the hybrid model, investigated by \cite{baldeaux2015hybrid}, is extended by relaxing the assumption of independence between the short rate and the discounted GOP. 
It is shown how the real-world zero-coupon bond prices become significantly less than risk-neutral prices as maturities increase. 
Fast and accurate numerical pricing for European put options on zero-coupon bonds is also provided.

\cite{baldeaux2015hybrid} undertake an empirical investigation to determine which short-rate model, combined with the TCEV model for the discounted GOP, performs best when pricing and hedging long-dated zero-coupon bonds. Their investigation concluded that the $3/2$ short-rate model of \cite{ahn1999parametric} outperforms the competing models in terms of capturing the dynamics of the real-world short-term interest rate, as well as delivering the smallest prices for zero-coupon bonds. 

Under the real-world measure, the $3/2$ short-rate model is described by
\begin{align}
dr_t &= \kappa(\theta r_t - r_t^2)\, dt + \sigma r_t^{3/2}\, dW^r_t, \label{Eq: 3/2 Short Rate Model} \end{align}
with $r_0 > 0$. Proposition \ref{Prop: ZCB} below is reproduced from \cite{baldeaux2015hybrid}.

\begin{prop}
\label{Prop: ZCB}
If the Brownian motion, $W^r$, driving the short rate is independent of the Brownian motion, $W$, driving the GOP then the time-$t$ price of a fair zero-coupon bond maturing at $T$ is given by
\begin{align}
P_T(t, r_t, \dGOP_t) &= \E*{\frac{\GOP_t}{\GOP_T} \Bigl|{\Ainfo_t}} = \E*{\frac{\dGOP_t}{\dGOP_T}\frac{S^0_t}{S^0_T} \Bigl|{\Ainfo_t}} = M(\dGOP_t, t, T)G(r_t, t, T),\label{Eq: ZCB Expression}\\
\intertext{with}
M(\dGOP_t, t, T) &= \E*{\frac{\dGOP_t}{\dGOP_T}\Bigl|{\Ainfo_t}} = \chi'^{\,2}\left(\frac{(\dGOP_t)^{2(1 - a)}}{\Delta \varphi}; \delta - 2, 0\right) \label{Eq: MPOR component} \\
\intertext{denoting the market price of risk component and}
G(r_t, t, T) &= \E*{\frac{S^0_t}{S^0_T} \Bigl|{\Ainfo_t}} = \E*{\exp\left({\int_t^Tr_s\, ds}\right) \Bigl|{\Ainfo_t}} \label{Eq: IR component}
\end{align}
denoting the interest-rate (IR) component.
\end{prop}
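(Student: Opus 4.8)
The plan is to start from the real-world pricing formula of Theorem \ref{Thm: Real-world pricing} applied to the zero-coupon bond, whose terminal payoff is $V_T = 1$. This gives $P_T(t, r_t, \dGOP_t) = \GOP_t\,\E*{1/\GOP_T \mid \Ainfo_t} = \E*{\GOP_t/\GOP_T \mid \Ainfo_t}$, which is the first equality of \eqref{Eq: ZCB Expression}. I would then write $\GOP_s = \dGOP_s\,S^0_s$ for $s\in\{t,T\}$ and factor the integrand as
\[
\frac{\GOP_t}{\GOP_T} = \frac{\dGOP_t}{\dGOP_T}\cdot\frac{S^0_t}{S^0_T},
\]
which exposes the product structure $M\cdot G$ that the proposition asserts.

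The central step, and the one in which the independence hypothesis does all the work, is to show that the conditional expectation of this product factors into a product of conditional expectations. Since $\dGOP$ solves \eqref{Eq: DGOP TCEV} driven only by $W$, while $S^0$ is a functional of the short rate $r$ solving \eqref{Eq: 3/2 Short Rate Model} driven only by the independent Brownian motion $W^r$, the terminal ratios $\dGOP_t/\dGOP_T$ and $S^0_t/S^0_T$ should be conditionally independent given $\Ainfo_t$. I expect this to be the main obstacle: it must be justified rigorously by invoking the Markov property of each diffusion together with the fact that, conditional on $\Ainfo_t$, the future increments of $W$ and those of $W^r$ are independent, so that the two ratios are measurable with respect to independent sub-$\sigma$-algebras. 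Granting this, the conditional expectation splits and yields precisely the definitions of $M(\dGOP_t, t, T)$ and $G(r_t, t, T)$ in \eqref{Eq: MPOR component} and \eqref{Eq: IR component}; the Markov property of $r$ also justifies writing $G$ as a function of $r_t$ alone.

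It then remains to evaluate $M(\dGOP_t, t, T) = \E*{\dGOP_t/\dGOP_T \mid \Ainfo_t}$ in closed form, and here I would simply reuse the squared-Bessel machinery already deployed in the proof of Proposition \ref{Prop: Put Option}. By Proposition \ref{Prop: DGOP to BESQ} I would write $1/\dGOP_T = X_{\varphi(T)}^{-(\delta/2 - 1)}$, integrate against the transition density $p_{\delta>2}$ from the Appendix, apply the symmetry relationship \eqref{Eq: Symmetry} to pass to the norm-decreasing $(4-\delta)$-dimensional density, and finally use \eqref{Eq: Schroder} and \eqref{Eq: Staying Pos Prob} to identify the resulting integral with $\chi'^{\,2}\!\left((\dGOP_t)^{2(1-a)}/\Delta\varphi;\, \delta - 2,\, 0\right)$. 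This reproduces \eqref{Eq: MPOR component} verbatim.

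The interest-rate component $G(r_t, t, T) = \E*{S^0_t/S^0_T \mid \Ainfo_t}$ is left as the conditional expectation of the discount factor under the $3/2$ model; since the proposition asserts only the factorization and not a closed form for $G$, no further computation is required at this stage. In summary, everything downstream of the factorization collapses into results already established in the excerpt, so I expect the genuinely delicate point to be the conditional-independence argument that produces the product $M\cdot G$.
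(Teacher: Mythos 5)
Your proposal is correct and follows essentially the same route as the paper's proof: use independence of the driving Brownian motions to factor the benchmarked expectation into the product $M\cdot G$, evaluate $M$ via the squared-Bessel transition density and symmetry relationship (exactly the computation appearing at the end of Appendix \ref{App: Squared Bessel Processes}), and leave $G$ as the discount-factor expectation. The paper's proof is just a terser version of yours, asserting the factorization without spelling out the conditional-independence justification you correctly identify as the one step needing care.
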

\begin{proof}
The independence of the GOP from the short rate, and thus the savings account, allows the expectation in \eqref{Eq: ZCB Expression} to be separated into the product of \eqref{Eq: MPOR component} and \eqref{Eq: IR component}. The right-hand side of \eqref{Eq: MPOR component} follows directly from the known transition density of the discounted GOP, provided by Proposition \ref{Prop: DGOP to BESQ}, and the right-hand side of \eqref{Eq: IR component} follows from the definition of the savings account, \eqref{Eq: Savings Account}.
\end{proof}

As a result of the above proposition, if the expectation in \eqref{Eq: IR component} was taken under the risk-neutral measure, the fair zero-coupon bond price could be interpreted as the product of the traditional risk-neutral bond price, $G(r_t, t, T)$, and the market price of risk component,  $M(\dGOP_t, t, T)$.
Note that the MPOR component is given explicitly as the probability that all paths of the inverse discounted GOP process have not attained zero by time $T$. This probability goes to $0$ as $T$ goes to infinity; eventually the growth-optimal portfolio dominates any other traded asset, ensuring that the expected value of the asset, expressed in terms of the GOP, goes to zero.

This behaviour can be seen in the left panel of Figure~\ref{Fig: MPOR IR}, where the MPOR component is plotted out to $15$ years using the model parameters in the previous section. Note that the MPOR component only becomes significantly less than one well after the $5$-year mark. This indicates that theoretical real-world bond prices and those obtained using classical risk-neutral pricing would coincide for maturities out to roughly $8$ years for these parameters. 

\begin{figure}
    \begin{center}
        \includegraphics[width=\columnwidth]{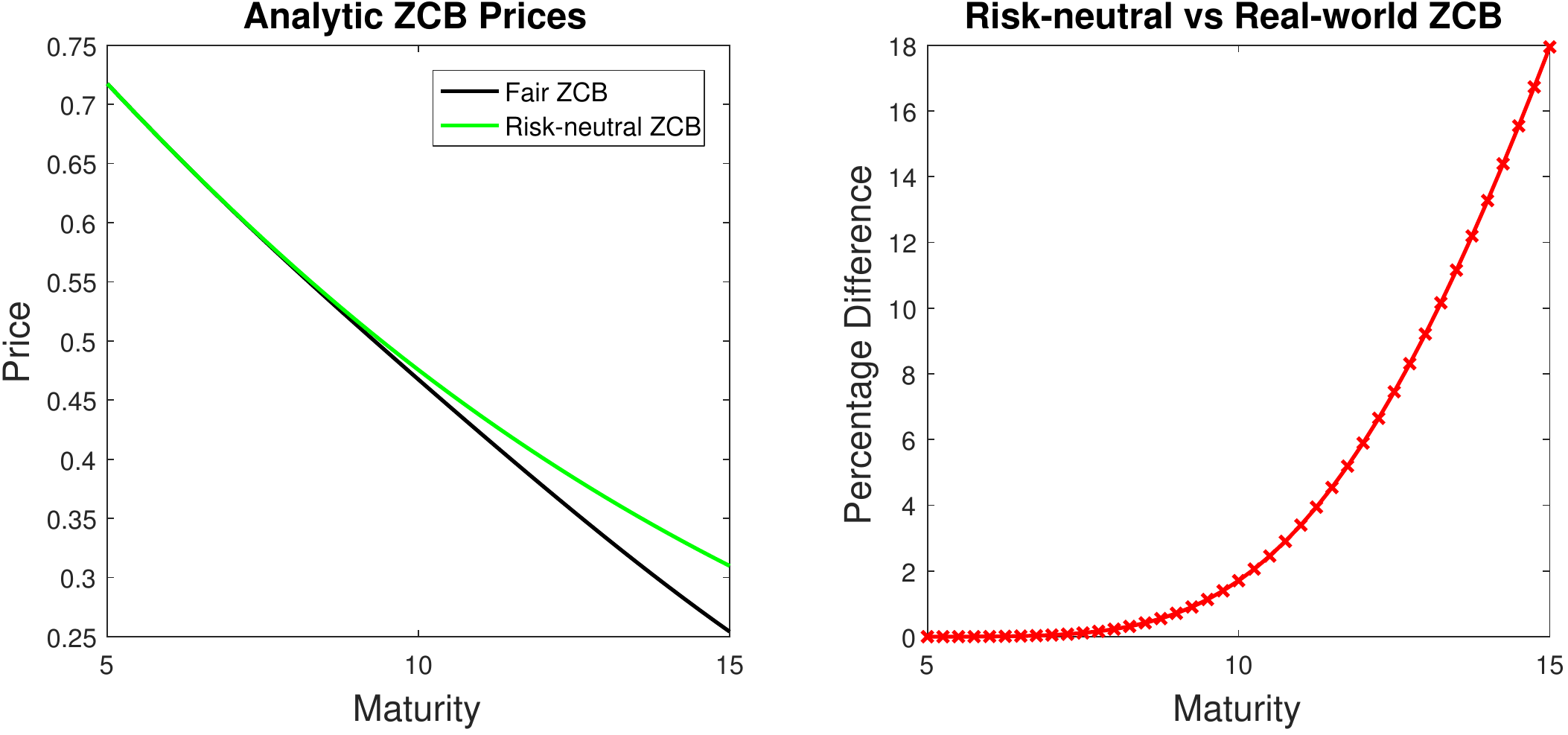}
    \end{center}
   \caption{The analytic fair zero-coupon bond price in the hybrid model compared to the hypothetical risk-neutral bond price for maturities ranging from $5$ to $15$ years.}
    \label{Fig: RN vs RW ZCBS}
\end{figure}

\begin{prop}
\label{Prop: IR Component}
Under the $3/2$ short-rate model,
\begin{equation*}
G(r_t, t, T) = \frac{\Gamma(\alpha - \gamma)x(r_t,t,T)^\gamma}{\Gamma(\alpha)}\prescript{}{1}F_1(\gamma, \alpha, -x(r_t, t, T))
\end{equation*}
with
\begin{align*}
\alpha&= \frac{2}{\sigma^2}\left(\kappa + (1 + \gamma)\sigma^2 \right),& \gamma&= \frac{1}{\sigma^2}\left(\sqrt{\phi^2 + 2\sigma^2} - \phi\right)\\
x(r, t, T) &= \frac{2\kappa\theta}{\sigma^2\left(e^{\kappa\theta(T - t)} - 1 \right)r},& \phi &= \kappa + \frac{1}{2}\sigma^2,  
\end{align*}
and where $\prescript{}{1}F_1$ is the confluent hypergeometric function of the first kind, or Kummer's function.
\end{prop}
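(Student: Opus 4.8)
The plan is to characterize $G$ as the solution of the Feynman--Kac PDE associated with the conditional expectation in \eqref{Eq: IR component} and to reduce that PDE to Kummer's equation via a self-similar change of variables. Since $S^0_t/S^0_T=\exp\bigl(-\int_t^T r_s\,ds\bigr)$, writing $\tau=T-t$ and $G=G(r,\tau)$, applying Feynman--Kac to the $3/2$ dynamics \eqref{Eq: 3/2 Short Rate Model} would give the backward equation
\[
\partial_\tau G=\tfrac12\sigma^2 r^3\,\partial_{rr}G+\kappa(\theta r-r^2)\,\partial_r G-rG,\qquad G(r,0)=1.
\]
As motivation for the substitution to come, I would note that $y_t=1/r_t$ obeys the CIR dynamics $dy_t=(\kappa+\sigma^2-\kappa\theta y_t)\,dt-\sigma\sqrt{y_t}\,dW^r_t$, so the $3/2$ model is a reciprocal square-root diffusion; this already suggests that the natural spatial variable is inversely proportional to $r$.

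First I would introduce the dimensionless variable $x=x(r,t,T)=\dfrac{2\kappa\theta}{\sigma^2(e^{\kappa\theta\tau}-1)\,r}$ from the statement and look for a solution of the separated form $G=C\,x^{\gamma}F(x)$, with $C$ a constant and $F$ regular at the origin. Using $\partial_r x=-x/r$, $\partial_{rr}x=2x/r^2$ and $\partial_\tau x=-\kappa\theta x-\tfrac12\sigma^2 r x^2$, the chain rule turns the PDE into an equation in which every explicit power of $r$ and every factor of $\tau$ cancels, leaving an ODE in $x$ alone; this cancellation is precisely what dictates the form of $x$, in particular the factor $e^{\kappa\theta\tau}-1$.

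Substituting $G=C\,x^\gamma F(x)$ and collecting powers of $x$, the $x^0$-coefficient multiplying $F$ must vanish for the reduced equation to be free of a singular $F/x$ term; this produces the indicial equation
\[
\sigma^2\gamma^2+(2\kappa+\sigma^2)\gamma-2=0,\qquad \phi:=\kappa+\tfrac12\sigma^2,
\]
whose positive root is exactly $\gamma=\sigma^{-2}\bigl(\sqrt{\phi^2+2\sigma^2}-\phi\bigr)$. With this $\gamma$ the coefficient of $F'$ collapses to $\alpha+x$ with $\alpha=\tfrac{2}{\sigma^2}\bigl(\kappa+(1+\gamma)\sigma^2\bigr)$, so the ODE becomes Kummer's equation $xF''+(\alpha+x)F'+\gamma F=0$, whose unique (up to scale) solution regular at $x=0$ is $F(x)=\prescript{}{1}F_1(\gamma,\alpha,-x)$. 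Finally, the terminal condition $G(r,0)=1$ is the singular limit $\tau\to0^+$, i.e.\ $x\to\infty$; the large-argument asymptotics $\prescript{}{1}F_1(\gamma,\alpha,-x)\sim\frac{\Gamma(\alpha)}{\Gamma(\alpha-\gamma)}x^{-\gamma}$ give $x^\gamma F(x)\to\Gamma(\alpha)/\Gamma(\alpha-\gamma)$, which fixes $C=\Gamma(\alpha-\gamma)/\Gamma(\alpha)$ and reproduces the stated formula.

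The main obstacle will be the bookkeeping in the change of variables: one must verify that the ansatz $C\,x^\gamma F(x)$ removes all explicit dependence on $r$ and $\tau$ simultaneously, since it is the interplay between the diffusion term $\tfrac12\sigma^2 r^3\partial_{rr}G$ and the drift and discount terms that forces both the indicial quadratic for $\gamma$ and the value of $\alpha$; a single misplaced factor would leave a residual term and spoil the identification with Kummer's equation. The second delicate point is the $x\to\infty$ matching: one must select the positive root $\gamma$, which guarantees $\alpha-\gamma=\tfrac{2\kappa}{\sigma^2}+2+\gamma>0$ so that $\Gamma(\alpha-\gamma)$ is finite and the asymptotic constant is correct, and must discard the second, irregular solution of Kummer's equation on the grounds of regularity at $x=0$ together with the a priori bound $0\le G\le1$ inherited from $0<S^0_t/S^0_T\le1$ when $r\ge0$.
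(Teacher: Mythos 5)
Your derivation is correct, but it is worth noting that the paper does not actually prove this proposition: its ``proof'' is a one-line citation to Section~3 of Ahn and Gao (1999), so what you have written is essentially a self-contained reconstruction of the argument in that reference. I checked the key computations and they all hold: the Feynman--Kac PDE has the stated form (note you have silently, and correctly, fixed the sign typo in the paper's \eqref{Eq: IR component}, where the exponent should be $-\int_t^T r_s\,ds$); the derivatives $\partial_r x=-x/r$, $\partial_{rr}x=2x/r^2$, $\partial_\tau x=-\kappa\theta x-\tfrac12\sigma^2 rx^2$ are right, and after substitution every term carries exactly one factor of $r$, which cancels, leaving $\tfrac12\sigma^2x^2H''+\bigl(\kappa+\sigma^2\bigr)xH'+\tfrac12\sigma^2x^2H'-H=0$ with no residual $\tau$-dependence; the indicial equation $\sigma^2\gamma^2+(2\kappa+\sigma^2)\gamma-2=0$ and the resulting $\gamma$, $\alpha=\tfrac{2\kappa}{\sigma^2}+2+2\gamma$ and Kummer equation $xF''+(\alpha+x)F'+\gamma F=0$ all come out as you state, and the large-$x$ asymptotics of $\prescript{}{1}F_1(\gamma,\alpha,-x)$ do fix $C=\Gamma(\alpha-\gamma)/\Gamma(\alpha)$. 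The one place where your argument is heuristic rather than complete is the passage between the probabilistic definition of $G$ and the PDE solution: selecting the regular Kummer solution and matching the $\tau\to0^+$ limit identifies a candidate, but a rigorous proof still needs a verification step --- e.g.\ applying It\^{o}'s formula to $\exp\bigl(-\int_t^s r_u\,du\bigr)G(r_s,s,T)$ and showing the resulting local martingale is a true martingale (which follows here from boundedness, $0\le G\le1$, once one checks the candidate satisfies this bound and that the $3/2$ process neither hits zero nor explodes). That gap is shared by the cited source and is standard, so it does not undermine your proof; it just deserves a sentence.
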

\begin{proof}
See \citet[Sec.~3]{ahn1999parametric}.
\end{proof}

\begin{figure}
    \begin{center}
        \includegraphics[width=\columnwidth]{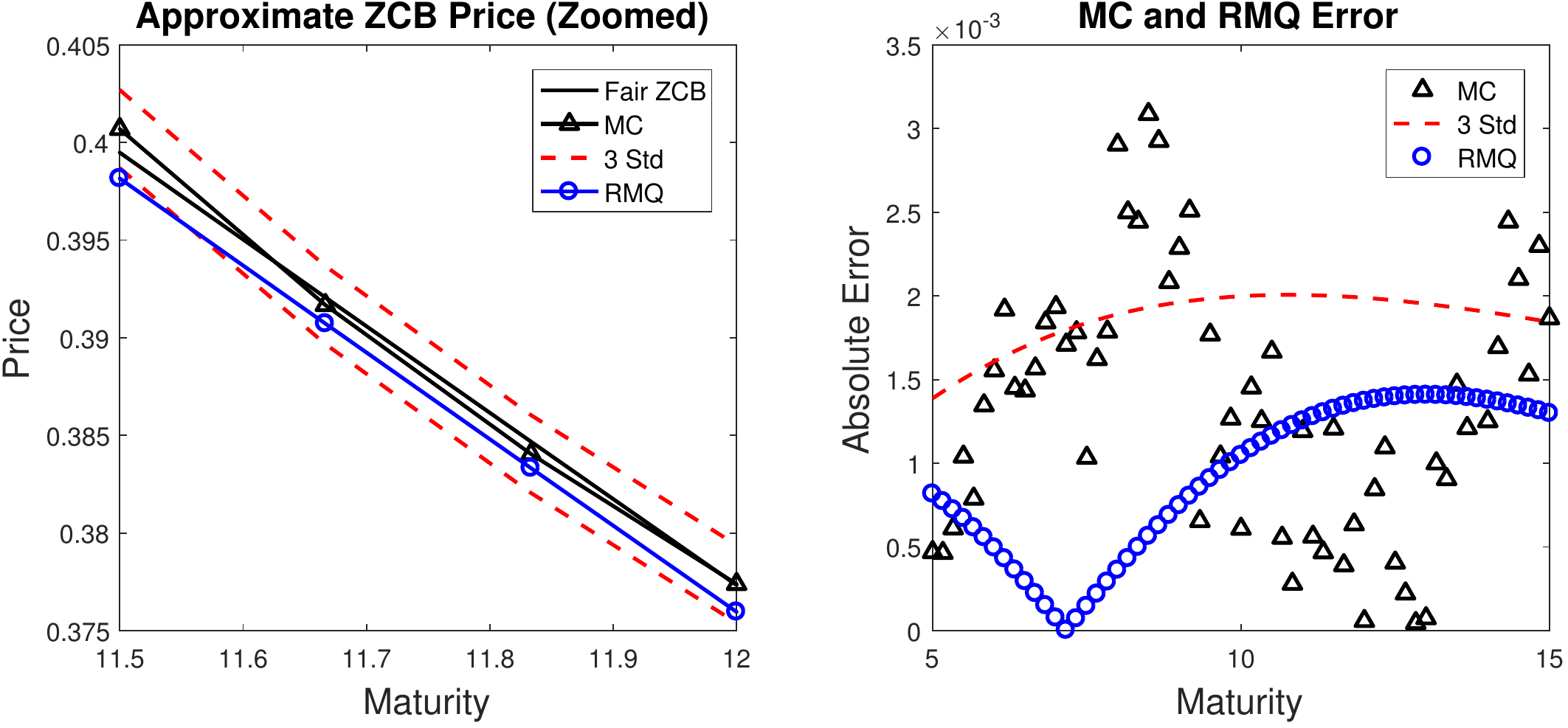}
    \end{center}
   \caption{Approximating the fair zero-coupon bond price using Monte Carlo simulation and RMQ.}
    \label{Fig: Hybrid ZCB}
\end{figure}

The IR component is illustrated in the right panel of Figure \ref{Fig: MPOR IR}, using the $3/2$ model parameters estimated from the market by \cite{baldeaux2015hybrid}, with $\kappa=3.5726,$ $\theta=0.096,$ $\sigma=0.7960$ and the initial short rate selected as $r_0=0.05$.

Finally, the analytic fair zero-coupon bond price is displayed in Figure \ref{Fig: RN vs RW ZCBS} for maturities out to $15$ years and contrasted with the risk-neutral bond price. 
The right panel of Figure \ref{Fig: RN vs RW ZCBS} shows the difference between the hypothetical risk-neutral bond and the real-world bond as a percentage of the classical risk-neutral bond price. At a maturity of $15$ years, the fair bond is $18\%$ less expensive to purchase. It is clear that the fair bond will continue to become less expensive, under the $3/2$ dynamics, as the maturity lengthens further. 
It is beyond the scope of this work to demonstrate the hedging of these contracts. \cite{hulley2012hedging} have, however, demonstrated that the theoretical real-world bond prices can be accurately hedged. 

In Figure \ref{Fig: Hybrid ZCB} the fair zero coupon bond price is approximated using Monte Carlo simulation and the RMQ algorithm.
The Monte Carlo simulation for the IR and MPOR components were each computed using $100\,000$ paths. 
The MPOR component could be long-stepped to each maturity considered, as the exact transition density is known, however the $3/2$ model was simulated using an Euler-Maruyama scheme with $6$ time steps per year. The RMQ algorithm also used $6$ time steps per year with $50$ codewords for the short-rate process and $150$ codewords for the discounted GOP process. 
The Monte Carlo simulation took $13.5$ seconds to compute, whereas the RMQ algorithm was more than twice as fast at $5.1$ seconds.

The absolute error for both methods is small; the left panel of Figure \ref{Fig: Hybrid ZCB} has been zoomed in to focus on a $6$-month period so that the difference between the approximations and the analytic value can be seen. The Monte Carlo method presents some bias over the full period, as more points lie outside the three standard deviation bounds than would be expected. The RMQ algorithm lies well within the error bounds for the full range of maturities.


\subsubsection{Affect of Short Rate and GOP Correlation}
Although the assumption of independence between the short rate and GOP seems restrictive, some empirical evidence is provided for it by \cite{baldeaux2015hybrid}. They use the daily $3$-month USD T-Bill rates as the proposed short rate and the EWI114 equi-weighted index to approximate the GOP. They find that the covariation remains close to zero and exhibits no clear trend. The theory of approximating the GOP using a well-diversified world-index is presented by \cite{platen2012approximating}. It is now demonstrated that the RMQ methodology can also efficiently handle the case of correlation between the short-rate and the GOP.

To account for correlation, the Joint RMQ algorithm, developed by \cite{rudd2017quantization}, has been used.
In the left panel of Figure \ref{Fig: Hybrid ZCB Correlation} the zero-coupon bond price under the hybrid model is displayed for a range of correlation values between $-0.9$ and $0.9$. The large range is chosen to exaggerate the effect. The impact of positive correlation is greater than that of negative correlation, but the overall impact of the correlation is small. Again, the figure has been zoomed in to focus on a $6$-month period. In the right panel of Figure \ref{Fig: Hybrid ZCB Correlation} the percentage difference between the zero-coupon bond price and the price with zero correlation is displayed for different correlations and maturities out to $15$ years. 

These numerical results substantiate the original finding: even for large values of the correlation the impact on the zero-coupon bond price is less than $0.8\%$. Thus, for bond-pricing applications, correlation between the GOP and the short rate may be safely neglected. Intuitively, the correlation can be viewed as affecting the path of the MPOR process while leaving the path of the IR process unchanged, for example, when running an Euler Monte Carlo simulation with the covariance matrix decomposed using the Cholesky decomposition. However, the path of the MPOR process only has a minimal effect on the zero-coupon bond price for shorter maturities, as seen in the left panel of Figure \ref{Fig: Hybrid ZCB}. Thus, changing the correlation only has a small effect on the final zero-coupon bond price.

\begin{figure}
    \begin{center}
        \includegraphics[width=\columnwidth]{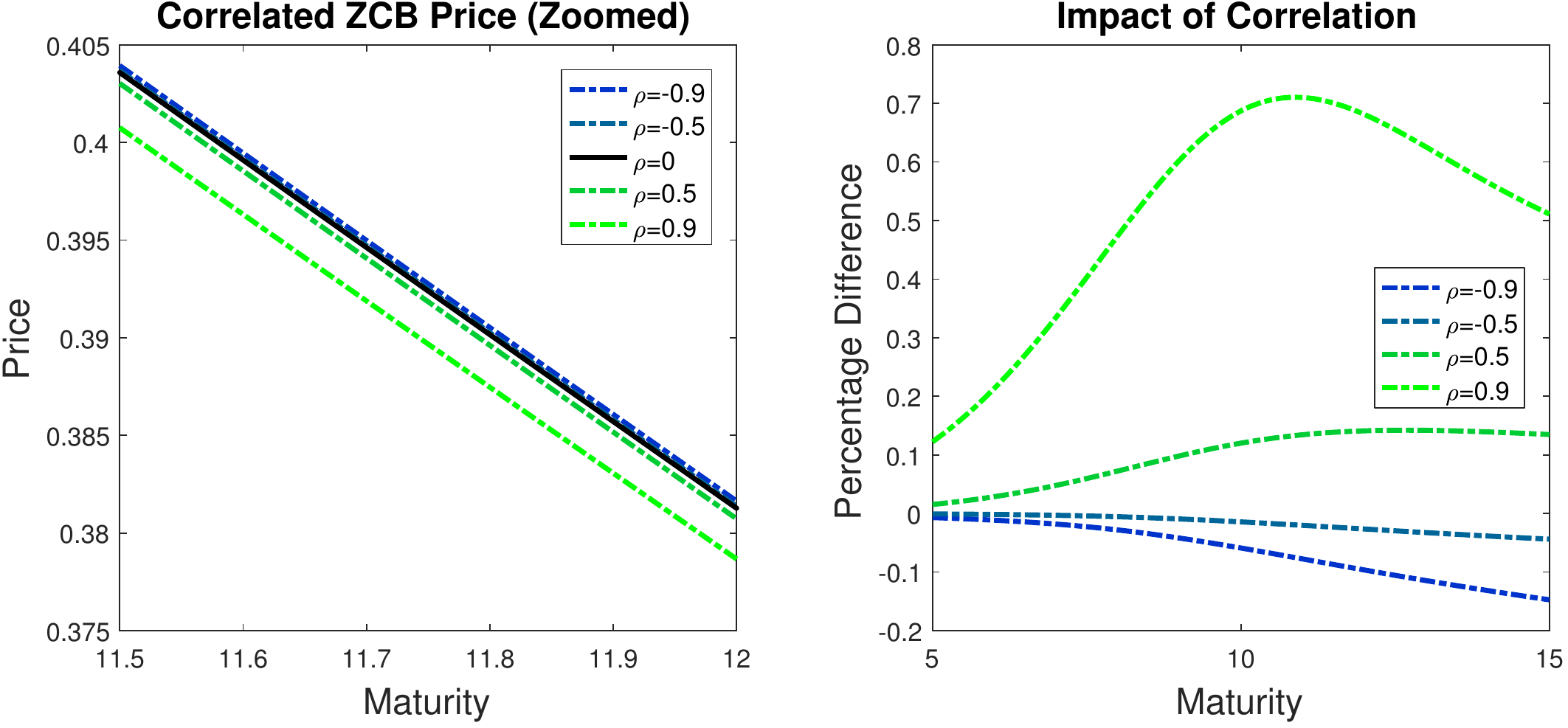}
    \end{center}
   \caption{The fair zero-coupon bond price in the hybrid model for a range of correlation values.}
    \label{Fig: Hybrid ZCB Correlation}
\end{figure}

\subsubsection{Zero-coupon Bond Options}
\begin{figure}
    \begin{center}
        \includegraphics[width=\columnwidth]{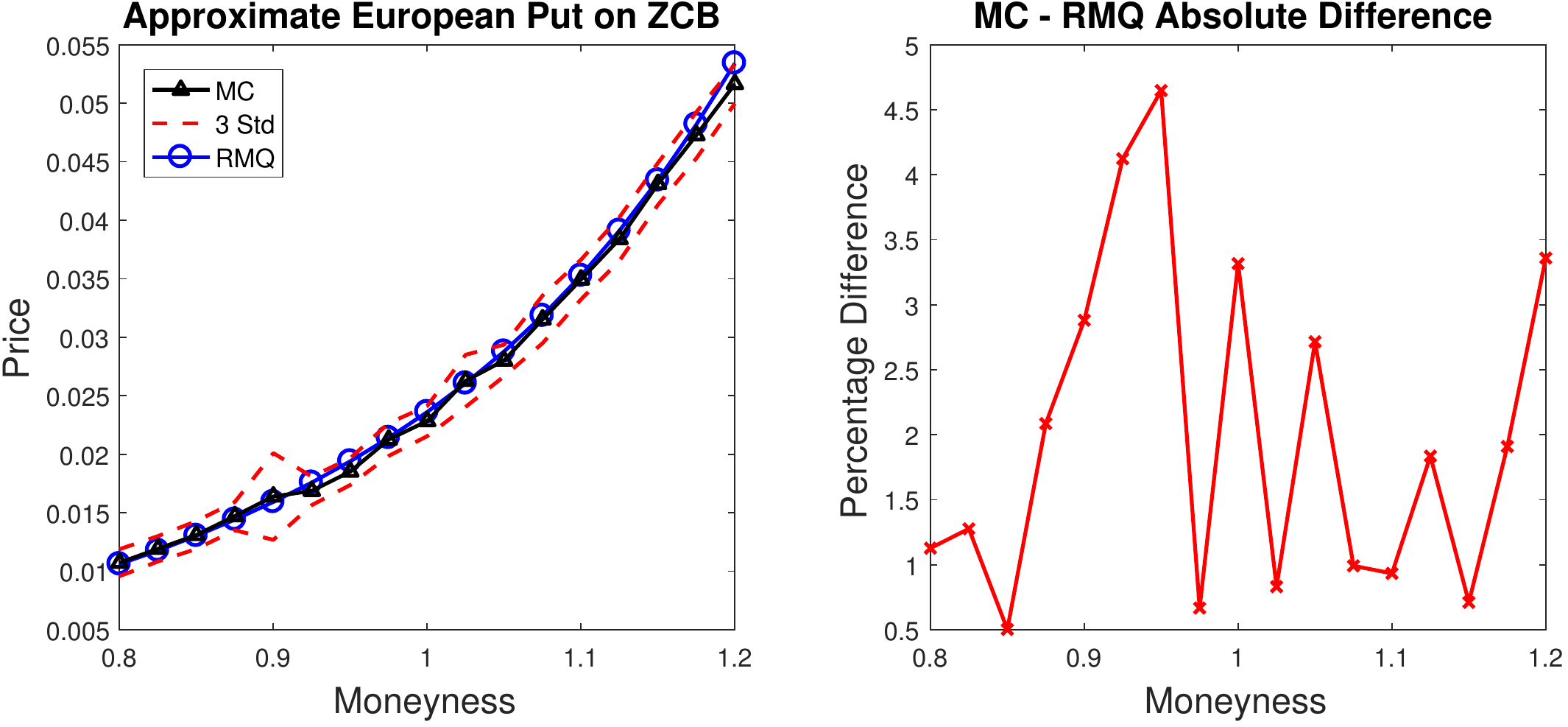}
    \end{center}
   \caption{Approximating the prices of a European put on a zero-coupon bond using Monte Carlo simulation and RMQ.}
    \label{Fig: Hybrid ZCO}
\end{figure}

To price a European put option on a zero-coupon bond, denoted ZCP, under the hybrid model with the option maturing at $T$ and the bond maturing at $S>T$, the following expectation must be computed
\[ \mathrm{ZCP}_{T,S,K}(t, r_t, \dGOP_t) \vcentcolon= \E*{\frac{\GOP_t}{\GOP_T}\left(K - P_S(T, r_T, \dGOP_T) \right)^+ \Bigl|{\Ainfo_t}}.\]
In Figure \ref{Fig: Hybrid ZCO}, the prices obtained using Monte Carlo simulation and the RMQ algorithm for the case $T=10$ years and $S=15$ years are displayed. The at-the-money strike is taken as the fair forward bond. 
As before, the Monte Carlo simulation for the IR and MPOR components used $100\,000$ paths each. The RMQ algorithm used $12$ time steps per year with $50$ codewords for the short-rate process and $150$ codewords for the discounted GOP process. 
The Monte Carlo simulation took $2.45$ seconds per strike, whereas the RMQ algorithm computed the prices for all strikes in $3.6$ seconds. 

There is no reference price available, but the prices obtained using the RMQ algorithm and Monte Carlo simulation lie sufficiently close together to indicate that RMQ is efficient and accurate. Barring a single deep in-the-money point, all the prices obtained using RMQ lie within the three standard deviation bounds of the Monte Carlo simulation. The average difference between the prices across all strikes is less than $2\%$.

It has already been established that, under the dynamics of the hybrid model, real-world zero-coupon bonds are less expensive than the bonds implied by traditional risk-neutral pricing. If a risk-neutral measure is assumed, this leads to an asymmetry in the prices of vanilla options on zero-coupon bonds. Real-world put options on fair bonds are more expensive than risk-neutral put options on risk-neutral bonds. The reverse is, of course, true for call options. This behaviour is illustrated in Figure \ref{Fig: RN vs RW ZCOs} for $T=5$ years and $S=10$ years. For the example depicted, the fair forward bond was computed using real-world pricing and the same strikes were used for both the real-world and risk-neutral options.

\begin{figure}
    \begin{center}
        \includegraphics[width=\columnwidth]{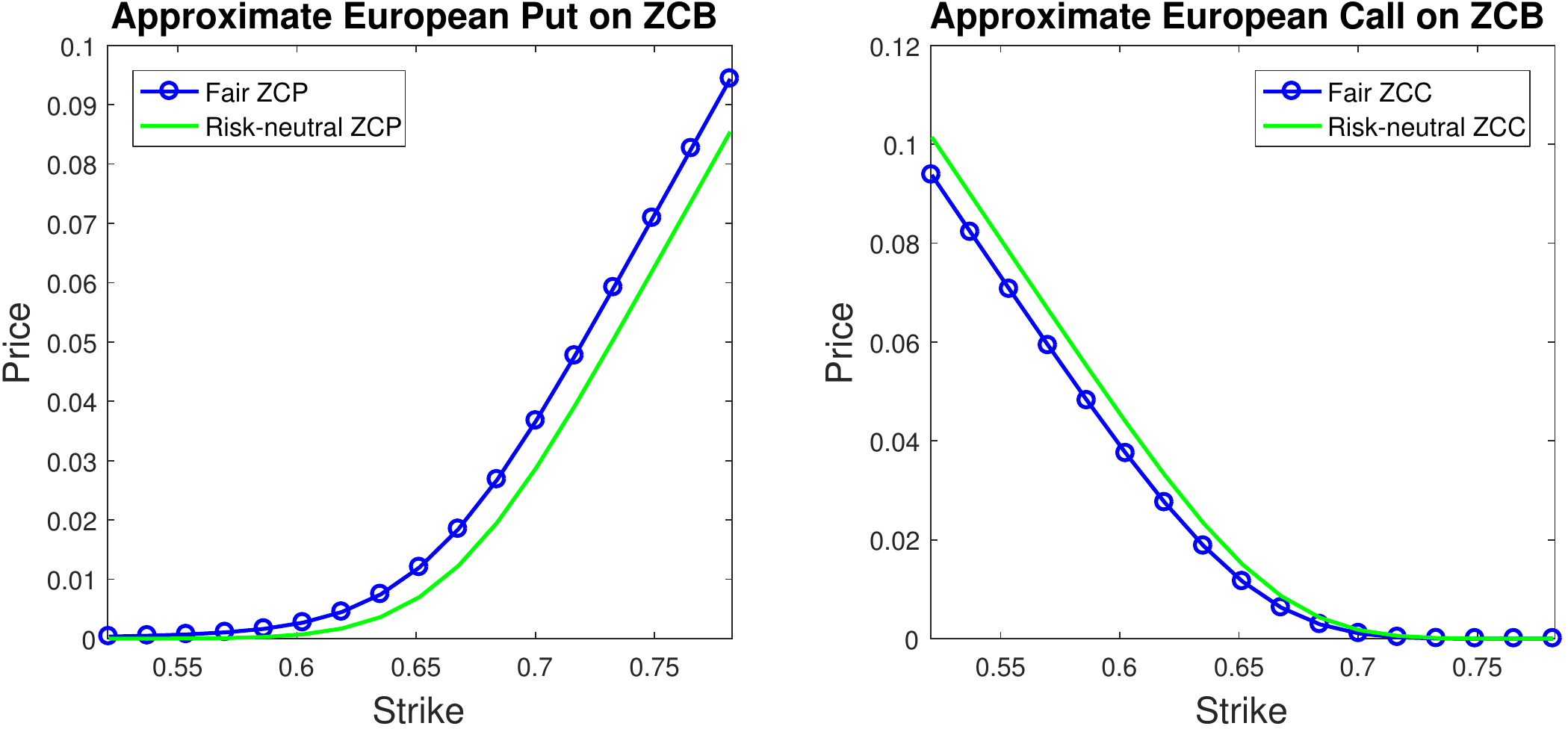}
    \end{center}
   \caption{Comparison of European put and call options written on zero-coupon bonds using real-world pricing and priced under a hypothetical risk-neutral measure.}
    \label{Fig: RN vs RW ZCOs}
\end{figure}

\section{Conclusion}
\label{Sec: Conclusion}

In this paper, the benchmark approach was reviewed and a derivation of the real-world pricing theorem was presented for a two-asset continuous market.
It was demonstrated that real-world pricing may produce significantly lower prices for long-dated bonds and vanilla options than the classical risk-neutral pricing approach.
The time-dependent constant elasticity of variance model was used to model the growth-optimal portfolio. Under this model and the assumption of constant interest rates, analytic European option pricing formulae were derived in detail, extending the results of \cite{miller2008analytic,miller2010real}, for the modified constant elasticity of variance model and the stylized minimal market model. Recursive marginal quantization was used to efficiently and accurately produce long-dated European option pricing surfaces as well as price Bermudan options on the growth-optimal portfolio.

The hybrid model of \cite{baldeaux2015hybrid} is constructed by combining the TCEV model, for the growth-optimal portfolio, with the $3/2$ short-rate model of \cite{ahn1999parametric}. Under this combined model, RMQ was used to efficiently price long-dated zero-coupon bonds and options on zero-coupon bonds. The effect of introducing correlation between the growth-optimal portfolio and the stochastic short rate was investigated using the Joint RMQ algorithm, where it was shown that the correlation only has a minor impact.

This paper applied the RMQ algorithm outside the traditional risk-neutral framework by highlighting its effectiveness as a pricing mechanism for long-dated contracts under the benchmark approach. 

\clearpage
\bibliographystyle{abbrvnat}
\bibliography{/Projects/Bibliography/References}

\newpage
\appendix

\titleformat{\section}{\normalfont\Large\bfseries}{Appendix~\thesection: }{0em}{}

\section{Squared Bessel Processes}
\label{App: Squared Bessel Processes}
This appendix summarizes properties of squared Bessel processes that are relevant to real-world pricing.

 Introduce $\vecW = \{\vecW_t = (w^1_t, w^2_t, \cdots, w^n_t)^\top, t\in[0,\infty)\}$, as an $n$-dimensional standard Brownian motion with $n\in\mathbb{N}$. Let the process $R = \{R_t = \norm{\vecW_t}, t\in[0,\infty)\},$ be the Euclidean norm of $\vecW$, such that $(R_t)^2 = \sum_{i=1}^{n}(w^i_t)^2$. It\^{o}'s formula provides
\begin{equation*}
    d(R_t)^2 = n\,dt + \sum_{i=1}^{n}2w^i_t\,dw^i_t.
\end{equation*}
For any $t>0$, $\P(R_t=0)=0$, such that
\begin{equation*}
    dW_t = \frac{1}{R_t}\sum_{i=1}^nw^i_t\,dw^i_t
\end{equation*}
is a real-valued Brownian motion via Lev\'y's characterization,\footnote{It is a continuous local martingale starting from zero and its quadratic variation is easily verified.} and $R$ satisfies
\begin{equation*}
    d(R_t)^2 = n\,dt + 2R_t\,dW_t.
\end{equation*}
Set $X_t = (R_t)^2$. Then, for every $\delta\in\mathbb{N}$ and $X_0 = x \geq 0$, the unique strong solution to the stochastic differential equation
\begin{equation*}
    dX_t = \delta\,dt + 2\sqrt{\abs{X_t}}\,dW_t,
\end{equation*}
is known as a \emph{squared Bessel process} of dimension $\delta$, denoted $\mathrm{BESQ}^\delta_t$. Although this intuitive derivation accounts only for squared Bessel processes of positive and integer dimension, this can be extended to $\delta\in\R$ \citep{revuz2013continuous}.


\begin{defn}[$\mathrm{BESQ}^\delta_t$]
    For every $\delta\in\R$ and $x\in\R$, the unique strong solution to
    \begin{equation}
        dX_t = \delta\, dt + 2\sqrt{\abs{X_t}}\,dW_t, \label{Eq: BESQ}
    \end{equation}
    with $X_0=x$, is called a squared Bessel process of dimension $\delta$, starting at $x$ and denoted by $\mathrm{BESQ}^\delta_t$.
\end{defn}

To relate the stochastic differential equations that arise when modelling the growth-optimal portfolio to squared Bessel processes, Proposition 6.3.1.1 of \citet{jeanblanc2009mathematical} is reproduced below. 

\begin{prop}
	\label{Prop: CIR to BESQ}
	Let $S = \{S_t,\ t\in[0,\infty)\}$ be a Cox-Ingersoll-Ross process satisfying
	\begin{equation*}
	    dS_t = \kappa(\theta-S_t)\,dt + \sigma\sqrt{S_t}\,dW_t,
	\end{equation*}
	with $S_0 = x \geq 0$ and $\kappa,\ \theta>0$, and define $\varphi(t) = \tfrac{\sigma^2}{4\kappa}(e^{\kappa t}-1).$
	Then
	\begin{equation*}
	    S_t = e^{-\kappa t}X_{\varphi(t)},
	\end{equation*}
	where $X_{\varphi(t)}$, $\varphi(t)\geq0$, is a $\textnormal{BESQ}^\delta_{\varphi(t)}$ process with dimension $\delta = \frac{4\kappa\theta}{\sigma^2}$.
\end{prop}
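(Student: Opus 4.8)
The plan is to prove the identity by direct verification followed by an appeal to pathwise uniqueness. Since the correspondence between the two equations is bijective, the cleanest route is to start from a squared Bessel process $X$ of dimension $\delta = \frac{4\kappa\theta}{\sigma^2}$ with $X_0 = x$, solving \eqref{Eq: BESQ} in $\varphi$-time, and to define the candidate process $S_t \vcentcolon= e^{-\kappa t}X_{\varphi(t)}$. The goal is then to show that this $S$ satisfies the stated CIR stochastic differential equation with $S_0 = x$; uniqueness of the CIR solution then forces $S$ to coincide with the given process, which is exactly the claimed equality.

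First I would dispose of the deterministic time change. Because $\varphi(t) = \tfrac{\sigma^2}{4\kappa}(e^{\kappa t}-1)$ is smooth, strictly increasing and satisfies $\varphi(0)=0$, the time-changed driving noise $\widetilde{W}_{\varphi(t)}$ is a continuous local martingale with quadratic variation $\varphi(t)$, and so admits the representation $\widetilde{W}_{\varphi(t)} = \int_0^t\sqrt{\varphi'(s)}\,dW_s$ for a standard Brownian motion $W$; equivalently $d\widetilde{W}_{\varphi(t)} = \sqrt{\varphi'(t)}\,dW_t$ with $\varphi'(t) = \tfrac{\sigma^2}{4}e^{\kappa t}$. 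Substituting \eqref{Eq: BESQ} evaluated along $\varphi(t)$ then expresses the dynamics of $X_{\varphi(t)}$ as a genuine It\^o process in the variable $t$.

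Next I would apply It\^o's product rule to $S_t = e^{-\kappa t}X_{\varphi(t)}$. The drift splits into a $-\kappa S_t\,dt$ contribution from differentiating the exponential and an $e^{-\kappa t}\delta\varphi'(t)\,dt$ contribution from the drift of $X$; the second simplifies to $\kappa\theta\,dt$ precisely because of the choice $\delta = \frac{4\kappa\theta}{\sigma^2}$. The diffusion term is $e^{-\kappa t}\cdot 2\sqrt{X_{\varphi(t)}}\sqrt{\varphi'(t)}\,dW_t$, and inserting $\sqrt{X_{\varphi(t)}} = e^{\kappa t/2}\sqrt{S_t}$ together with $\sqrt{\varphi'(t)} = \tfrac{\sigma}{2}e^{\kappa t/2}$ collapses it to $\sigma\sqrt{S_t}\,dW_t$. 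Collecting terms yields $dS_t = \kappa(\theta - S_t)\,dt + \sigma\sqrt{S_t}\,dW_t$ with $S_0 = X_0 = x$, completing the verification.

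The hard part will be the rigorous handling of the time change rather than this coefficient matching, which is routine and is forced entirely by the definitions of $\delta$ and $\varphi$. Specifically, one must justify representing $\widetilde{W}_{\varphi(t)}$ as a stochastic integral against a true Brownian motion of the stated quadratic variation, which relies on the Dambis--Dubins--Schwarz / time-change theorem for continuous local martingales, and one must ensure the resulting driving noise is adapted to the appropriate filtration. Fortunately, because $\varphi$ is deterministic and strictly increasing, the inverse time change is well defined and no random clock is involved, so once the Brownian representation is secured the remainder of the argument is purely computational.
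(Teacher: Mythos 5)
Your verification is correct: the coefficient matching works exactly as you describe, with $\varphi'(t)=\tfrac{\sigma^2}{4}e^{\kappa t}$ turning the BESQ drift $e^{-\kappa t}\delta\varphi'(t)$ into $\kappa\theta$ precisely when $\delta=\tfrac{4\kappa\theta}{\sigma^2}$, and the diffusion term collapsing to $\sigma\sqrt{S_t}$. Note, though, that the paper offers no proof of this proposition at all --- it is quoted verbatim as Proposition 6.3.1.1 of Jeanblanc, Yor and Chesney, so there is nothing to compare against except the standard argument, which is the one you give. Two small remarks on your write-up. First, as you acknowledge, your argument runs in the direction ``BESQ $\Rightarrow$ CIR'' and then invokes uniqueness for the CIR equation to identify the given $S$ with the constructed process; to make that step airtight you should name the uniqueness result you are using (pathwise uniqueness via Yamada--Watanabe, since $x\mapsto\sqrt{x}$ is H\"older-$\tfrac12$) and note that it yields the pathwise identity only after matching the driving Brownian motions, which is unproblematic here because the time change is deterministic and invertible. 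A marginally cleaner route that avoids the uniqueness appeal entirely is to define $X_u:=e^{\kappa\varphi^{-1}(u)}S_{\varphi^{-1}(u)}$ directly from the given $S$ and verify that it solves the $\mathrm{BESQ}^\delta$ equation; the computation is the mirror image of yours. Second, your invocation of Dambis--Dubins--Schwarz is heavier machinery than needed: since $\varphi$ is deterministic with $\varphi'>0$, one can simply set $W_t:=\int_0^t(\varphi'(s))^{-1/2}\,d\widetilde{W}_{\varphi(s)}$ and check by L\'evy's characterization that it is a standard Brownian motion, exactly as the paper itself does elsewhere in Appendix A. Neither point is a gap; the proof is sound.
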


This allows the square-root process to be expressed as a time-transformed squared Bessel process, for which the transition density is well-understood. 



\subsection{The Transition Density of the Squared Bessel Process}
\label{Sec: BEQ Transition Density}
\cite{lindsay2012simulation} investigate the transition densities of squared Bessel processes under three different regimes, categorized by the dimension, $\delta$. They follow the classic analysis of \cite{feller1951two}, who proceeded by solving the Fokker-Planck equation associated with a more general version of \eqref{Eq: BESQ}. 


\subsubsection{\texorpdfstring{The case $\delta\leq 0$}{Case I}}
When $\delta\leq0$, the $X=0$ boundary is attainable and absorbing. The fundamental solution to the associated Fokker-Plank equation is the transition density
\begin{equation}
	p_{\delta\leq 0}(X_T, T; X_0) = \frac{1}{2T}\left(\frac{X_T}{X_0}\right)^{\frac{1}{2}\left(\frac{\delta}{2}-1\right)} \exp\left(- \frac{X_T+X_0}{2T}\right)I_{1 - \frac{\delta}{2}}\left(\frac{\sqrt{X_TX_0}}{T} \right), \label{Eq: Density dl2}
\end{equation}
where $I_\nu(x)$ is a modified Bessel function of the first kind with index $\nu$. By inspection, the above is related to the noncentral chi-squared density,
\begin{align}
	p_{\delta\leq 0}(X_T, T; X_0) &= \frac{1}{T} p_{\chi'^{\,2}}\left(\frac{X_0}{T}; 4 - \delta, \frac{X_T}{T} \right), \label{Eq: Chi2 Reversal} \\
	\intertext{expressed as a function of the noncentrality parameter, such that}
	\int_x^\infty p_{\delta\leq 0}(X, T; X_0)\, dX &= \int_x^\infty p_{\chi'^{\,2}}\left(\frac{X_0}{T}; 4 - \delta, \frac{X}{T} \right)\, dX \notag\\
	&= \chi'^{\,2}\left(\frac{X_0}{T}; 2-\delta, \frac{x}{T} \right). \label{Eq: Schroder}
\end{align}
The final step above was shown by \cite{schroder1989computing}.

This density is, however, \emph{norm-decreasing},
\begin{align}
\int_0^\infty p_{\delta\leq 0}(X, T; X_0)\, dX &= \chi'^{\,2}\left(\frac{X_0}{T}; 2-\delta, 0 \right)  \leq 1, \label{Eq: Staying Pos Prob}
\end{align}
as it does not include the probability of the process being absorbed at zero. 
\cite{lindsay2012simulation} propose constructing a full, norm-preserving density by adding a Dirac mass at the origin,
\begin{equation}
p^{\mathrm{full}}_{\delta\leq 0}(X_T, T; X_0) \vcentcolon= 2\left(1 - \chi'^{\,2}\left(\frac{X_0}{T}; 2-\delta, 0 \right)\right)\bar{\delta}(X_T) + p_{\delta\leq 0}(X_T, T; X_0),
\end{equation}
where $\bar{\delta}(x)$ is the Dirac delta function. Then the distribution of $X$ is given by
\begin{equation}
\P(X\leq X_T | X_0) = \int_0^{X_T} p^{\mathrm{full}}_{\delta\leq 0}(X, T; X_0)\, dX = 1 - \chi'^{\,2}\left(\frac{X_0}{T}; 2-\delta, \frac{X_T}{T} \right).
\end{equation}

\subsubsection{\texorpdfstring{The case $0<\delta<2$}{Case II}}
When $0<\delta<2$, the $X=0$ boundary is attainable and can be either absorbing or reflecting. If an absorbing boundary is selected, the analysis of the previous section holds,
\[ p^{\mathrm{A}}_{0<\delta<2}(X_T, T; X_0)\vcentcolon= p^{\mathrm{full}}_{\delta\leq 0}(X_T, T; X_0).\]
If a reflecting boundary is selected, the sign of the index of the Bessel function in the density changes, 
\begin{align}
	p^{\mathrm{R}}_{0<\delta<2}(X_T, T; X_0) &= \frac{1}{2T}\left(\frac{X_T}{X_0}\right)^{\frac{1}{2}\left(\frac{\delta}{2}-1\right)} \exp\left(- \frac{X_T+X_0}{2T}\right)I_{\frac{\delta}{2}-1}\left(\frac{\sqrt{X_TX_0}}{T} \right) \label{Eq: Density dg2}\\
	&= \frac{1}{T}p_{\chi'^{\,2}}\left(\frac{X_T}{T}; \delta, \frac{X_0}{T} \right), \notag
\end{align}
such that it is directly related to the non-central chi-squared density without the reversal of the roles of $X_T$ and $X_0$ seen in the $\delta\leq0$ case in \eqref{Eq: Chi2 Reversal}. This density is clearly norm-preserving.

\subsubsection{\texorpdfstring{The case $\delta>2$}{Case III}}
When $\delta>2$, the process can not attain zero and thus no boundary conditions can be specified. The transition density is of the same type as in the reflecting case,
\begin{equation}
p_{\delta> 2}(X_T, T; X_0) = \frac{1}{T}p_{\chi'^{\,2}}\left(\frac{X_T}{T}; \delta, \frac{X_0}{T} \right). \label{Eq: Reflecting Density}
\end{equation}

\subsubsection{Symmetry Relationships}
When $\delta>2$, the transition density is given by \eqref{Eq: Reflecting Density}. When considering only an absorbing boundary, for all $\delta<2$, the norm-decreasing part of the transition density is given by \eqref{Eq: Density dl2}. A simple symmetry relationship exists between these two expressions, 
\begin{equation}
	p_{\delta}(X_T, T; X_0) = p_{4-\delta}(X_0, T; X_T), \label{Eq: Density transformation}
\end{equation}
that aids in the option pricing problems considered in the paper.

Furthermore, 
\begin{align}
	X_T^{\left(1 - \frac{\delta}{2}\right)} p_{\delta}(X_T, T; X_0) = X_0^{\left(1 - \frac{\delta}{2}\right)} p_{\delta}(X_0, T; X_T) = X_0^{\left(1 - \frac{\delta}{2}\right)}p_{4-\delta}(X_T, T; X_0). \label{Eq: Symmetry}
\end{align}
The first equation follows from simple arithmetic using either \eqref{Eq: Density dg2} or \eqref{Eq: Density dl2} and the second equation follows from the first symmetry relationship above, \eqref{Eq: Density transformation}.

An important implication of the above result is that if $X$ is a $\mathrm{BESQ}^\delta$ process with $\delta>2$ and $X_0>0$, the process $Z_t = X_t^{\left(1 - \frac{\delta}{2}\right)}$ is a \emph{strict} local martingale\footnote{In fact it is a supermartingale, by Fatou's lemma, as it is bounded below}. This follows because the integral of the last term above is strictly less than one:
\begin{align*}
\E*{Z_T} &= \int_0^\infty X^{\left(1 - \frac{\delta}{2}\right)} p_{\delta>2}(X, T; X_0)\, dX \\
 &= \int_0^\infty X_0^{\left(1 - \frac{\delta}{2}\right)}p_{(4 - \delta)<2}(X_0, T; X)\, dX\\
&= X_0^{\left(1 - \frac{\delta}{2}\right)} \chi'^{\,2}\left(\frac{X_0}{T}; \delta-2, 0 \right) \\
&< Z_0.
\end{align*}

\end{document}